\documentclass[10pt]{amsart}
\usepackage{amsmath,amsfonts,amsthm,amssymb,eufrak,comment}
\usepackage{amscd}
\usepackage{times,fancyhdr}
\usepackage{array}
\usepackage{xy}
\usepackage{epsfig}
\usepackage{graphicx}
\usepackage{color}

\newcommand{\re}{{\text{\rm re}}}
\newcommand{\im}{{\text{\rm im}}}

\pagestyle{plain}
\usepackage{euscript}


\newcounter{thecounter}
\numberwithin{thecounter}{section}
\newtheorem{lemma}[thecounter]{Lemma}
\newtheorem{proposition}[thecounter]{Proposition}
\newtheorem{theorem}[thecounter]{Theorem}
\newtheorem{thm}[thecounter]{Theorem}

\topmargin 0in
\headheight 0in
\headsep 0in
\textheight 8.1in
\textwidth 6.3in
\oddsidemargin 0in
\evensidemargin 0in
\headheight 24pt
\headsep 0.25in
\hoffset=0.3cm
\setlength{\parskip}{0.16cm}
\parindent 0in

\newcommand{\R}{{\mathbb{R}}}

\newcommand{\Z}{{\mathbb{Z}}}

\DeclareMathOperator{\rank}{rank}

\def\a{\alpha}
\def\b{\beta}

\newcommand{\realroots}{\Delta^\text{re}}
\begin{document}

\title{Root subsystems of rank 2 hyperbolic root systems}

\author{Lisa Carbone, Matt Kownacki, Scott H.\ Murray  and Sowmya Srinivasan}

\begin{abstract} Let $\Delta$  be a rank 2 hyperbolic root system. Then $\Delta$ has generalized Cartan matrix 
$H(a,b)= \left(\begin{smallmatrix} ~2 & -b\\
-a & ~2 \end{smallmatrix}\right)$
indexed by  $a,b\in\mathbb{Z}$ with $ab\geq 5$.  If $a\neq b$, then $\Delta$ is non-symmetric and is generated by one long simple root and one short simple root; whereas if  $a= b$,  $\Delta$ is symmetric and is generated by two long simple roots. We prove that  if $a\neq b$, then $\Delta$ contains an infinite family of symmetric rank 2 hyperbolic root subsystems $H(k,k)$ for certain $k\geq 3$, generated by either two short or two long simple roots. We also prove that $\Delta$ contains non-symmetric rank 2 hyperbolic root subsystems $H(a',b')$, for certain $a',b'\in\mathbb{Z}$ with $a'b'\geq 5$. One of our tools is a characterization of the types of root subsystems that are generated by a subset of roots. We classify these types of subsystems in rank 2 hyperbolic root systems.

\end{abstract}
\thanks{This research made extensive use of the Magma computer algebra system.}
\maketitle

\section{Introduction}

Let $\Delta$ be the root system of a rank 2 Kac--Moody algebra $\mathfrak{g}$. Then $\Delta$ has generalized Cartan matrix 
$H(a,b):= \left(\begin{smallmatrix} ~2 & -b\\
-a & ~2 \end{smallmatrix}\right)$
for some $a,b\in\mathbb{Z}$ with $ab\geq 5$. Let $S=\{\alpha_1,\alpha_2\}$ denote a basis of simple roots of $\Delta$.
If $a\neq b$, then $\Delta$ is non-symmetric and its base consists of a long simple root and a short simple root, whereas if  $a= b$,  $\Delta$ is symmetric and its base consists of two long simple roots. 

The root system $\Delta$ contains two types of roots; real and imaginary. The real roots of $\Delta$ are of the form $w\alpha_i$ for some $w\in W$, where $W$ is the Weyl group of the root system. In the rank 2 hyperbolic case, $W\cong D_{\infty}$, is the infinite dihedral group. The additional imaginary roots will not play a significant role in this work. The real roots are supported on the branches of a hyperbola in $\R^{(1,1)}$, with a pair of branches for each root length (Figure 1).

We are also motivated by the following question. If $\alpha$ and $\beta$ are real roots of a Kac-Moody group, then the commutator $[\chi_{\alpha},\chi_{\beta}]$  involves the root group corresponding to $\alpha+\beta$. The  commutator $[\chi_{\alpha},\chi_{\beta}]$ is then trivial if the sum $\alpha+\beta$ is not a root.  In order to determine the non--trivial commutators in  the  Kac--Moody group associated to $H(a,b)$, we may therefore reduce to the study of rank 2 root subsystems in a general Kac--Moody root system. This observation provides the setting for the current work.

As observed by Morita (\cite{Mor}, \cite{Mor2}), to determine the group commutators, it is also necessary to characterize the set  of positive $\Z$--linear combinations $(\mathbb{Z}_{>0}\alpha+\mathbb{Z}_{>0}\beta)\cap\Delta^{\re}(H(a,b))$  for  real roots $\alpha,\beta$ whose sum is a real root.   A similar question was answered in arbitrary Kac--Moody root systems by Billig and Pianzola ([BP]). We obtain an explicit description of this set for all rank 2 hyperbolic root systems.

In future work, we will use these results to determine the non--trivial group commutators and their structure constants ([CMW]).

We obtain proofs of the results stated by Morita (\cite{Mor})  that if $a$ and $b$ are both greater than one, then no sum of real roots can be a real root. It follows that the prounipotent subgroup corresponding to the positive real roots on a single branch of the hyperbola is commutative. When $a$ or $b=1$ we prove, as stated in \cite{Mor},  that the prounipotent subgroup generated by all the positive real short root groups is metabelian and the prounipotent subgroup generated by all the positive real long root groups is commutative. Our results in Sections~\ref{rootsums} and \ref{Rk2subsys} also cover the affine cases $H(2,2)$ and $H(4,1)$.

In order to make our results precise, we use two different concepts of a subsystem generated by a subset  $\Gamma$ of real roots: namely a subsystem $\Phi(\Gamma)$,  corresponding to a reflection subgroup of the Weyl group and consisting entirely of real roots; and $\Delta(\Gamma)$ consisting of all roots that can be written as an integral linear combination of elements of $\Gamma$. Such a $\Delta(\Gamma)$ subsystem contains both real and imaginary roots and corresponds to a certain subalgebra of the Kac--Moody algebra.

We have classified both kinds of subsystem inside a rank 2 infinite root system, and found that the two concepts of subsystem are equivalent in almost all cases:
\begin{thm} 
Let $\Delta$ be a  rank 2 infinite root system and let $\Gamma$ be a set of real roots which generate $\Delta$, that is, $\Delta(\Gamma)=\Delta$.
Then either $\Phi(\Gamma)$ is the set of all real roots in $\Delta$ or it is the set of all \emph{short} real roots in $\Delta$.
The second case occurs only if $a=1$ or $b=1$ and $\Phi(\Gamma)$ consists of short roots.
\end{thm}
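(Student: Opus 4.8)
The plan is to recast both hypothesis and conclusion as statements about the root lattice $Q=\Z\alpha_1\oplus\Z\alpha_2$ and about $2\times 2$ determinants of pairs of positive real roots, and then to run a growth estimate on the three–term recurrence for consecutive positive real roots. First I would observe that, since every root of $\Delta$ (real or imaginary) lies in $Q$ and since $\alpha_1,\alpha_2$ are real roots, the hypothesis $\Delta(\Gamma)=\Delta$ is equivalent to $\Z\Gamma=Q$: we have $\Delta(\Gamma)=\Delta\cap\Z\Gamma$, so $\Delta(\Gamma)=\Delta$ forces $\alpha_1,\alpha_2\in\Z\Gamma$ and hence $\Z\Gamma=Q$, while conversely $\Z\Gamma=Q$ gives $\Delta(\Gamma)=\Delta$. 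So I may assume $\Z\Gamma=Q$.

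Next I would pin down $\Phi(\Gamma)$. Since $\Z\Gamma=Q$ has rank $2$, $\Gamma$ contains two non-proportional real roots, so $W(\Gamma)=\langle s_\gamma:\gamma\in\Gamma\rangle$ is an infinite dihedral reflection subgroup of $W\cong D_\infty$. Put $t=s_1s_2$; every reflection of $W$ equals $t^{j}s_1$ for a unique $j\in\Z$, and let $\beta_j$ be the positive real root with reflection $t^js_1$, so $\beta_0=\alpha_1$ and $\beta_{-1}=\alpha_2$. An elementary computation in $D_\infty$ (this is the rank-$2$ subsystem classification mentioned in the introduction) shows that every infinite reflection subgroup of $W$ equals $\langle t^d,\,t^ms_1\rangle$ for some $d\ge 1$ and $m\in\Z$, depending only on $m\bmod d$, with associated subsystem $\Phi_{d,m}=\{\pm\beta_j: j\equiv m\pmod{d}\}$ and canonical simple roots $\beta_m,\beta_{m+d}$; hence $\Phi(\Gamma)=\Phi_{d,m}$ for the relevant pair, and $\Z\Phi(\Gamma)=\Z\beta_m+\Z\beta_{m+d}$. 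As $\Gamma\subseteq\Phi(\Gamma)$ and $\Z\Gamma=Q$, this gives $\Z\beta_m+\Z\beta_{m+d}=Q$, equivalently $\det(\beta_m,\beta_{m+d})=\pm1$ in the basis $\{\alpha_1,\alpha_2\}$. So the theorem reduces to: for which $(d,m)$ is $\{\beta_m,\beta_{m+d}\}$ a $\Z$-basis of $Q$?

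The crux is a determinant estimate. Consecutive positive real roots satisfy $\beta_{j+1}=c_j\beta_j-\beta_{j-1}$ for $j\ge 1$ (with a mirror relation for $j\le -2$), where the Cartan integer $c_j=\langle\beta_{j-1},\beta_j^\vee\rangle$ alternates between $a$ and $b$; in particular $\det(\beta_j,\beta_{j+1})=1$ for all $j\ge0$. Since $\langle t^d,t^ms_1\rangle$ is unchanged on replacing $m$ by $m+d$, and $W$ acts on $Q$ by determinant-$(\pm1)$ maps, I may take $m\ge 0$; then $x_k:=\det(\beta_m,\beta_{m+k})$ obeys $x_0=0$, $x_1=1$, $x_{k+1}=c_{m+k}x_k-x_{k-1}$. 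If $\min(a,b)\ge 2$, all $c_j\ge 2$ and a one-line induction shows $(x_k)_{k\ge 1}$ strictly increasing, so $x_k\ge x_2=c_{m+1}\ge 2$ for $k\ge 2$. If $\min(a,b)=1$, then $c_j$ alternates between $1$ and $\max(a,b)\ge 4$, and a direct check shows $|x_k|=1$ only for $k=1$ always, and for $k=2$ exactly when $c_{m+1}=1$; moreover $c_{m+1}=1$ holds precisely when $\beta_m$ is a short root (so $\Phi_{2,m}$ — namely $\beta_m$ and every second root thereafter — is exactly the set of short real roots) together with $\min(a,b)=1$. (In the complementary case $c_{m+1}=\max(a,b)$ the sequence is not monotone, but its one $c_j=1$ step only drops $x_k$ to $\max(a,b)-1\ge 3$, so $|x_k|\ge 2$ for $k\ge 2$.)

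Assembling this: $\Z\beta_m+\Z\beta_{m+d}=Q$ forces either $d=1$, whence $W(\Gamma)=W$ and $\Phi(\Gamma)=\Delta^{\re}$; or $d=2$ with $\min(a,b)=1$ and $\Phi(\Gamma)$ equal to the set of short real roots — that is, $a=1$ or $b=1$ and $\Phi(\Gamma)$ consists of short roots. In the symmetric case $a=b$ the short alternative is vacuous (all real roots have one length and $a=b\ge 2$), leaving only $d=1$. This is exactly the assertion. The main obstacle is the estimate in the third paragraph: one must establish the precise form and alternating coefficients of the recurrence for consecutive positive real roots, and then check that, starting from $(x_0,x_1)=(0,1)$, the sequence $|x_k|$ returns to $1$ only in the single short-subsystem case — the delicate sub-case being $\min(a,b)=1$, where a coefficient $c_j=1$ can momentarily shrink $|x_k|$.
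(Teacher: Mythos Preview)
Your approach is correct and takes a genuinely different route from the paper. The paper derives Theorem~1.1 as a corollary of its full classification of $\Delta$-subsystems (Theorem~\ref{subsysbeq1}): it first splits $\Phi$-subsystems into five explicit types $\text{I}_L,\text{I}_S,\text{II}_L,\text{II}_S,\text{II}_{LS}$ using the four root sequences $\alpha^{LL}_j,\alpha^{LU}_j,\alpha^{SU}_j,\alpha^{SL}_j$ parametrized by $\eta_j,\gamma_j$, and then for each type determines exactly which ambient real roots lie in $\Z\Gamma$ via divisibility lemmas such as $\gamma_d\mid\gamma_j\Leftrightarrow d\mid j$ and $\eta_d\mid\eta_j\Leftrightarrow j\in d+(2d+1)\Z$. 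Your argument bypasses both the type-by-type casework and the divisibility machinery by computing the lattice index $[Q:\Z\Phi(\Gamma)]=|\det(\beta_m,\beta_{m+d})|$ through a single three-term recurrence on determinants; this is more economical for Theorem~1.1 itself, though it does not recover the finer output of the paper's Theorem~\ref{subsysbeq1} (the exact description of $\Delta^{\re}(\Gamma)$ for \emph{every} $\Gamma$, including the $a=4$, $b=1$ anomaly in part~(ii)).

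Two points in your write-up deserve tightening. First, for $0\le m<d$ the actual simple roots of $\Phi_{d,m}$ are $\beta_{m-d}$ and $\beta_m$ (the two positive roots of the subsystem straddling the gap at indices $-1,0$), not $\beta_m$ and $\beta_{m+d}$; however $s_{\beta_m}$ sends $\beta_{m-d}$ to $\pm\beta_{m+d}$, so $|\det(\beta_{m-d},\beta_m)|=|\det(\beta_m,\beta_{m+d})|$ and your recurrence with $m\ge0$ still computes the right index. Second, in the $\min(a,b)=1$ case your phrase ``its one $c_j=1$ step only drops $x_k$'' is misleading, since \emph{every other} coefficient equals $1$; the bound $|x_k|\ge2$ for $k\ge3$ (resp.\ $k\ge2$) is true but wants a short induction rather than that remark.
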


Our classification also gives us the following result, which holds for either concept of subsystem:
\begin{thm}\label{infsubs} 
If $\Delta$ is a rank 2 hyperbolic root system, then $\Delta$ contains symmetric rank 2 hyperbolic root subsystems of type $H(k,k)$ for infinitely many distinct $k\geq 3$. 
If $\Delta$ is non-symmetric of type $H(a,b)$, then it
also contains non-symmetric rank 2 hyperbolic root subsystems of type $H(a\ell,b\ell)$ for infinitely many distinct $\ell\geq 2$.
\end{thm}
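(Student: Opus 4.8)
The plan is to exhibit explicit pairs of real roots whose span (in either sense $\Phi$ or $\Delta$) is a hyperbolic rank $2$ root system of the required type, using powers of the Coxeter element $c=s_1s_2$ of the infinite dihedral Weyl group $W$. Since $\operatorname{tr}c=ab-2\geq 3$, the element $c$ has infinite order, and the Cayley--Hamilton relation $c^2=(ab-2)c-1$ forces every orbit sequence $c^n\alpha_i=x^{(i)}_n\alpha_1+y^{(i)}_n\alpha_2$ to satisfy the linear recursion $u_{n+1}=(ab-2)u_n-u_{n-1}$. Fix the standard symmetrizing bilinear form with $(\alpha_1,\alpha_1)=2a$, $(\alpha_2,\alpha_2)=2b$, $(\alpha_1,\alpha_2)=-ab$; since $c$ is an isometry, $c^n\alpha_i$ has the same length as $\alpha_i$, so the two Cartan integers attached to a pair $\{\alpha_i,c^n\alpha_j\}$ are determined by $(\alpha_i,c^n\alpha_j)$ together with the length ratio.

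For the symmetric subsystems I would take the pairs $\Gamma_n=\{\alpha_1,-c^n\alpha_1\}$ (and likewise $\{\alpha_2,-c^n\alpha_2\}$, which gives the ``two short roots'' version when $\alpha_2$ is short). A short computation with the form gives $\langle\alpha_1^\vee,-c^n\alpha_1\rangle=\langle(-c^n\alpha_1)^\vee,\alpha_1\rangle=-m_n$, where $m_0=2$, $m_1=ab-2$, and $m_{n+1}=(ab-2)m_n-m_{n-1}$; equivalently $m_n=\mu^n+\mu^{-n}$ with $\mu+\mu^{-1}=ab-2$ and $\mu>1$. Hence $\{m_n:n\geq 1\}$ is a strictly increasing sequence of integers $\geq 3$, and $\Gamma_n$ has Cartan matrix $H(m_n,m_n)$.

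For the non-symmetric subsystems (so $a\neq b$) I would take $\Gamma_n=\{\alpha_1,c^n\alpha_2\}$. Writing $c^n\alpha_2=p_n\alpha_1+q_n\alpha_2$, an induction on the recursion shows that $b\mid p_n$ for all $n$ (the $\alpha_1$-coefficient stays divisible by $b$ because $c\alpha_1=(ab-1)\alpha_1+a\alpha_2$ and $c\alpha_2=-b\alpha_1-\alpha_2$). Therefore $\langle\alpha_1^\vee,c^n\alpha_2\rangle=2p_n-bq_n=:b\ell_n$ is $b$ times an integer, while the length ratio $(c^n\alpha_2,c^n\alpha_2)/(\alpha_1,\alpha_1)=b/a$ forces $\langle(c^n\alpha_2)^\vee,\alpha_1\rangle=a\ell_n$. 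After flipping the sign of $c^n\alpha_2$ if needed so that the off-diagonal entries are negative, $\Gamma_n$ has Cartan matrix $H(a|\ell_n|,b|\ell_n|)$, which is hyperbolic and non-symmetric. The sequence $\ell_n$ obeys the same recursion with $\ell_0=\ell_1=-1$ and $\ell_2=3-ab$, so $|\ell_n|$ is strictly increasing for $n\geq 2$ with $|\ell_2|=ab-3\geq 2$; this yields infinitely many distinct $\ell\geq 2$ with $H(a\ell,b\ell)\subseteq\Delta$.

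It remains to see that each $\Gamma_n$ really generates the asserted hyperbolic subsystem in both senses. For $\Phi(\Gamma_n)$ this amounts to checking that $s_{\alpha_i}$ and $s_{c^n\alpha_j}=c^n s_{\alpha_j}c^{-n}$ generate an infinite dihedral reflection subgroup of $W$ — immediate by computing their product in $W\cong D_\infty$ — whose standard rank $2$ real root system has the Cartan matrix computed above; that $\Delta(\Gamma_n)$ coincides with it then follows from Theorem 1 and the classification of rank $2$ subsystems, the exceptional ``short roots only'' case being excluded since every parameter occurring here exceeds $1$. The part requiring care, rather than ingenuity, is exactly this last identification together with the bookkeeping of the symmetrizing form and the long/short labelling, so that the Cartan matrices come out as $H(a\ell,b\ell)$ on the nose; the existence of infinitely many parameters is then just the monotonicity of the Chebyshev-type recursion.
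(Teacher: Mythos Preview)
Your approach is essentially the same as the paper's. The pairs you write as $\{\alpha_1,-c^n\alpha_1\}$ and $\{\alpha_1,c^n\alpha_2\}$ are, in the paper's labelling, exactly the bases $\{\alpha^{LL}_0,\alpha^{LU}_{n-1}\}$ (type $\text{II}_L$) and $\{\alpha^{LL}_0,\alpha^{SU}_{n-1}\}$ (type $\text{II}_{LS}$) from the classification of $\Phi$-subsystems; your sequences satisfy $m_n=\delta_n$ and $|\ell_n|=\epsilon_{n-1}$, and the monotonicity argument via the recursion $X_j=(ab-2)X_{j-1}-X_{j-2}$ is identical to the paper's. The only presentational difference is that the paper first carries out the full classification and then reads off these families, whereas you construct them directly.

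One point to tighten: your justification that $\Delta(\Gamma_n)$ coincides with $\Phi(\Gamma_n)$ (``every parameter occurring here exceeds $1$'') does not by itself rule out the exceptional case of Theorem~1.1 applied to the subsystem, since a symmetric $H(m,m)$ can in principle arise as the short-root $\Phi$-subsystem of some $H(m+2,1)$. The clean argument is by root lengths: for $\Gamma_n=\{\alpha_1,-c^n\alpha_1\}$ both generators have the maximal length in $\Delta$ and hence cannot be short in $\Delta(\Gamma_n)$, while for $\Gamma_n=\{\alpha_1,c^n\alpha_2\}$ the two generators already have different lengths so $\Phi(\Gamma_n)$ is not a ``short roots only'' system. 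Alternatively, one may simply invoke the paper's theorem that $\Delta^{\re}(\Gamma)=\Phi(\Gamma)$ whenever $\Phi(\Gamma)$ is of type $\text{II}_L$ or $\text{II}_{LS}$.
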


We also classify the rank 2 $\Phi$-subsystems as finite, affine or hyperbolic
systems:
\begin{thm} 
Let $\Delta$ be a rank 2 root system and let $\Gamma$ be a nonempty set of real roots in $\Delta$.
\begin{enumerate}
\item If $\Delta$ is finite, then $\Phi(\Gamma)$ is finite.
\item If $\Delta$ is affine of type $\widetilde{A}_1$, then $\Phi(\Gamma)$ has finite type $A_1$ or affine type $\widetilde{A}_1$.
\item If $\Delta$ is affine of type $\widetilde{A}_2^{(2)}$, then $\Phi(\Gamma)$ has finite type $A_1$, or affine type $\widetilde{A}_1$ or $\widetilde{A}_2^{(2)}$.
\item If $\Delta$ is hyperbolic, then $\Phi(\Gamma)$ has finite type $A_1$ or hyperbolic type.
\end{enumerate}
\end{thm}

We mention the following related works: This work was inspired by the papers \cite{Mor} and \cite{Mor2} where the results of interest were stated without proof. Feingold and Nicolai (\cite{FN}, Theorem 3.1)  gave a method for generating a subalgebra corresponding to a $\Delta(\Gamma)$--type root subsystem for a certain choice of real roots in any Kac--Moody algebra.

As in Section 4 of this paper, Casselman (\cite{C}) reduced the study of structure constants for Kac--Moody algebras to rank 2 subsystems. Some of our results in Section 4 overlap with Section 4 of \cite{C}.

Tumarkin (\cite{T}) gave a classification the sublattices of  hyperbolic root lattices of the same rank. However, he requires conditions on the possible angles between roots that exclude all but a finite number of rank 2 hyperbolic root systems. In contrast, for our intended application to Kac--Moody groups,  we require the explicit construction of the embedding of the simple roots of a subsystems into the ambient system, rather than just describing its root lattice.

The authors are very grateful to Chuck Weibel for his careful reading of the MSc thesis ([Sr]) of the fourth author. This research was greatly facilitated by experiments carried out in the computational algebra systems Magma (\cite{BCFS}) and Maple (\cite{M}).

\section{Real Roots}
Let $A=H(a,b)$ be the $2\times 2$ generalized Cartan matrix\footnote{This is the transpose of the generalized Cartan matrix $A$ in [ACP].}
$$
A=H(a,b) = (a_{ij})_{i,j=1,2} =
\begin{pmatrix}
2 & -b \\
-a & 2
\end{pmatrix}
$$
for positive integers $a,b$, with Kac--Moody
algebra $\mathfrak{g}=\mathfrak{g}(A)$, root system
$\Delta=\Delta(A)$, and Weyl group $W=W(A)$. When $ab<4$, $A$ is positive definite and so $\Delta$ is finite.
When $ab=4$, $A$ is
positive semi-definite but not positive definite and so $\Delta$ is affine. When $ab>4$, $A$ is indefinite but every proper generalized Cartan submatrix is
positive definite, and so $A$ is  hyperbolic. 
Without loss of generality, we assume that $a\ge b$.

Let $S=\{\alpha_1,\alpha_2\}$ denote a basis of simple roots of $\Delta$.
We have the simple root reflections
$$w_j(\alpha_i)=\alpha_i-a_{ij}\alpha_j$$
for $i=1,2$ 
with matrices with respect to $S$
$$[w_1]_S = \left(\begin{array}{rr} -1&b\\0&1\end{array}\right),\qquad
[w_2]_S = \left(\begin{array}{rr}  1&0\\a&-1\end{array}\right).$$
The Weyl group $W=W(A)$ is the group generated by the simple root reflections $w_1$ and $w_2$.

Let
$$B = B(a,b)=\left(\begin{array}{rr}  2a/b&-a\\-a&2\end{array}\right)$$
be a symmetrization of $A$.
This defines the symmetric bilinear form 
$(u,v) = [u]_S^TB[v]_S$ and quadratic form $||u||^2=(u,u)$, which are  preserved under the action of $W$.
So the set of real roots is $$\Delta^\re=W\a_1\cup W\a_2.$$
Since $\Delta$ is finite, affine, or hyperbolic, the set of imaginary roots is 
$$\Delta^\text{im}=\{\alpha \in \mathbb{Z}\a_1+\mathbb{Z}\a_2 \mid \a\ne0\text{ and }||\a||^2 \le 0\}.$$

A diagram of the hyperbolic root system $H(5,1)$ is  given in Figure 1.

\begin{figure}[h]
\begin{center}
{\includegraphics[width=2.6in]{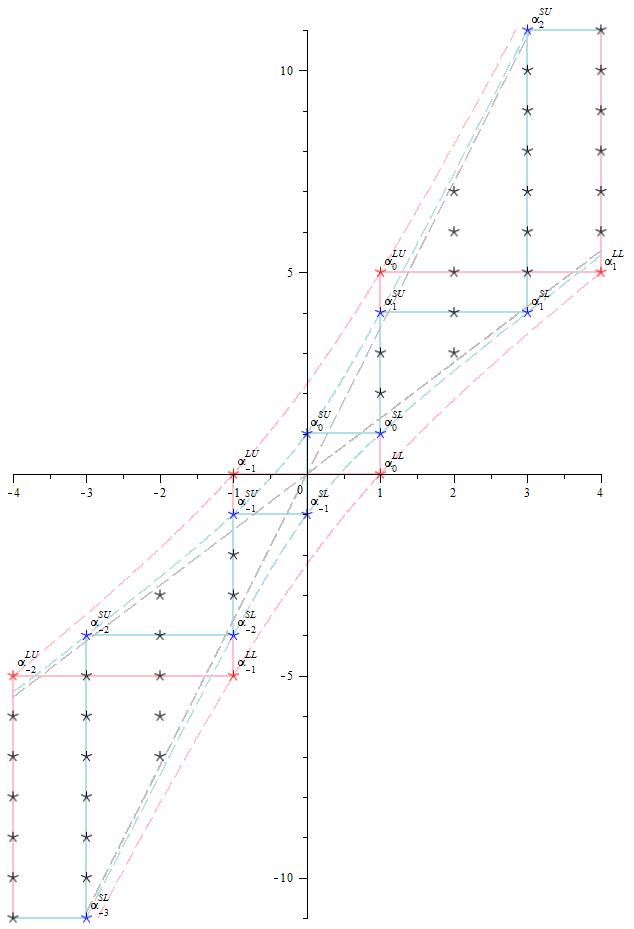}}
\caption{Root system of type $H(5,1)$}
\label{H51}
\end{center}
\end{figure}

Every root $\alpha\in\Delta$ has an expression of the form $\alpha= k_1\alpha_1+k_2\alpha_2$ where the $k_i$ are either all $\geq 0$, in which case $\alpha$ is called {\it positive}, or all $\leq 0$, in which case $\alpha$ is called {\it negative}. The positive roots are denoted $\Delta_+$, the negative roots $\Delta_-$.

Now $||\a_1||^2=2a/b$ and $||\a_2||^2=2$. 
So all real roots $x\alpha_1+y\alpha_2$ in the orbit $W\a_1$ satisfy
$$ {a}x^2 -abxy+by^2 = b,$$
and all real roots $x\alpha_1+y\alpha_2$ in the orbit $W\a_2$ satisfy
$$ {a}x^2 -abxy+by^2 = a.$$
These curves are displayed in Figures~1--3 as blue (resp. red) dotted lines. These curves are elliptical for finite systems, straight lines for affine systems, and hyperbolas for hyperbolic systems.
If $\Delta$ is nonsymmetric ($a>b$) the roots in $W\a_1$ are called \emph{long} and the roots in $W\a_2$ are called \emph{short}.
If $\Delta$ is symmetric ($a=b$) then all roots are considered to be \emph{long}.
Note that (with the exception of $A_2$), the real roots fall into two distinct orbits under the action of $W$. The diagrams use red for the orbit of $\alpha_1$, blue for the orbit of $\alpha_2$, and black for the imaginary roots. The horizontal lines indicate the action of $w_1$ while the vertical lines indicate the action of $w_2$.





For $j\in\mathbb{Z}$, we define
\begin{align*}
\alpha^{LL}_j &:= (w_1w_2)^j\alpha_1,&\alpha^{LU}_j &:= (w_2w_1)^jw_2\alpha_1\\
\alpha^{SU}_j &:= (w_2w_1)^j\alpha_2, &\alpha^{SL}_j &:= (w_1w_2)^jw_1\alpha_2.
\end{align*}
All real roots are given by these four sequences. If $ab\ge4$, then these are all distinct, and  a root is positive  if and only if $j \geq 0$. 
The roots for $j=-1,0,1$ are given in Table~\ref{T-rts}.

\begin{table}[h]
\begin{center}
\begin{tabular}{ c ||l|l|l  }
  $j$&  $-1$                              & $0$ & $1$ \\\hline
 $\alpha^{LL}_j$              & $-\alpha_1-a\alpha_2$ & $\alpha_1$ & $(ab-1)\alpha_1+a\alpha_2$                    \\
 $\alpha^{LU}_j $              & $-\alpha_1$ & $\alpha_1+a\a_2$ & $(ab-1)\alpha_1+a(ab-2)\alpha_2$                    \\
$\alpha^{SU}_j $               &$-b\alpha_1-\alpha_2$ & $\alpha_2$ &  $b\alpha_1+(ab-1)\alpha_2$ \\
$\alpha^{SL}_j$       &$-\alpha_2$ & $b\alpha_1+\alpha_2$ &  $b(ab-2)\alpha_1+(ab-1)\alpha_2$ 
\end{tabular}
\end{center}
\caption{Examples of real roots}\label{T-rts}
\end{table}


The following lemma characterizes the real roots in terms of recursive sequences $\eta_j$ and $\gamma_j$. Values of these sequences for small $j$ are given in Table~\ref{smallm}.
\begin{table}[h]
{\tiny
$$\begin{array}{r|rr}
j & \gamma_j & \eta_j \\\hline
0 & 0 & 1 \\
1 & 1 & ab - 1 \\
2 & ab - 2 & a^2b^2 - 3ab + 1 \\
3 & a^2b^2 - 4ab + 3 & a^3b^3 - 5a^2b^2 + 6ab - 1 \\
4 & a^3b^3 - 6a^2b^2 + 10ab - 4 & a^4b^4 - 7a^3b^3 + 15a^2b^2 - 10ab + 1 \\
5 & a^4b^4 - 8a^3b^3 + 21a^2b^2 - 20ab + 5 & a^5b^5 - 9a^4b^4 +
28a^3b^3 - 35a^2b^2 + 15ab - 1 \\
6 & a^5b^5 - 10a^4b^4 + 36a^3b^3 - 56a^2b^2 + 35ab - 6 & a^6b^6 -
11a^5b^5 + 45a^4b^4 - 84a^3b^3 + 70a^2b^2 - 21ab + 1 \\
7 & a^6b^6 - 12a^5b^5 + 55a^4b^4 - 120a^3b^3 + 126a^2b^2 - 56ab + 7 &
a^7b^7 - 13a^6b^6 + 66a^5b^5 - 165a^4b^4 + 210a^3b^3 - 126a^2b^2 +
28ab - 1 \\
8 & a^7b^7 - 14a^6b^6 + 78a^5b^5 - 220a^4b^4 + 330a^3b^3 - 252a^2b^2
+ 84ab - 8 & a^8b^8 - 15a^7b^7 + 91a^6b^6 - 286a^5b^5 + 495a^4b^4 -
462a^3b^3 + 210a^2b^2 - 36ab + 1 
\end{array}$$}
\caption{Values of $\eta_j$ and $\gamma_j$ for small $j$}
\label{smallm}
\end{table}
\begin{lemma}\label{acp} ([ACP], Lemmas 3.2 and 3.3)$\;$ For all integers $j$,
\begin{align*}
\alpha_j^{LL}&= \eta_{j}\alpha_1+a\gamma_{j}\alpha_2,& 
\alpha_j^{LU}&= \eta_{j}\alpha_1+a\gamma_{j+1}\alpha_2,\\
\alpha_j^{SU}&= b\gamma_j\alpha_1+\eta_{j}\alpha_2,& 
\alpha_j^{SL}&= b\gamma_{j+1}\alpha_1+\eta_{j}\alpha_2,
\end{align*}
where
\begin{enumerate}
\item $\gamma_0=0$, $\gamma_1=1$, $\eta_0=1$, $\eta_1=ab-1$;
\item $\eta_{j}=ab\gamma_{j}-\eta_{j-1}$;
\item $\gamma_{j}=\eta_{j-1}-\gamma_{j-1}$;
\item both sequences $X_j=\eta_j$ and $\gamma_j$ satisfy the recurrence relation
$$X_j=(ab-2)X_{j-1}-X_{j-2}.$$
\end{enumerate}
\end{lemma}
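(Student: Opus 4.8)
The plan is to reduce the whole statement to a computation with the two $2\times2$ matrices $[w_1w_2]_S$ and $[w_2w_1]_S$, together with the Cayley--Hamilton theorem. Multiplying the matrices $[w_1]_S$ and $[w_2]_S$ recorded above, one finds
$$[w_1w_2]_S=\begin{pmatrix} ab-1 & -b\\ a & -1\end{pmatrix},\qquad [w_2w_1]_S=\begin{pmatrix} -1 & b\\ -a & ab-1\end{pmatrix},$$
each of which has determinant $1$ and trace $ab-2$, and hence satisfies $M^2=(ab-2)M-I$. It follows that, for every $j\in\Z$, the entries of $[w_1w_2]_S^{\,j}$ and of $[w_2w_1]_S^{\,j}$, regarded as sequences in $j$, obey the linear recurrence $X_j=(ab-2)X_{j-1}-X_{j-2}$; this recurrence has constant coefficients and may be run backwards, so it also determines the entries for negative $j$.

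Next I would settle the properties of the sequences $\eta_j$ and $\gamma_j$ themselves. Taking (i) together with (ii) and (iii) as the definition of the two sequences for all $j\in\Z$, one substitutes (iii) into (ii) and then eliminates $\gamma_{j-1}$ using (ii) at index $j-1$, obtaining $\eta_j=(ab-2)\eta_{j-1}-\eta_{j-2}$; a parallel manipulation, combining (iii) with (ii), gives the same recurrence for $\gamma_j$. This is assertion (iv). Conversely, if one prefers to define the sequences by (i) and (iv), then (ii) and (iii) follow by noting that both sides of each identity satisfy the recurrence (iv) and agree at $j=0$ and $j=1$.

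With these facts available, I would prove by induction on $|j|$ --- the base case being $j=0$, where one also uses $\eta_{-1}=-1$ and $\gamma_0=0$ --- the closed forms
$$[w_1w_2]_S^{\,j}=\begin{pmatrix}\eta_j & -b\gamma_j\\ a\gamma_j & -\eta_{j-1}\end{pmatrix},\qquad [w_2w_1]_S^{\,j}=\begin{pmatrix}-\eta_{j-1} & b\gamma_j\\ -a\gamma_j & \eta_j\end{pmatrix},$$
the inductive step (in either direction) being exactly relations (ii) and (iii). The four coordinate formulas then drop out: $\alpha_j^{LL}=(w_1w_2)^j\alpha_1$ and $\alpha_j^{SU}=(w_2w_1)^j\alpha_2$ are the first and second columns of the respective matrices, while $\alpha_j^{LU}=(w_2w_1)^j(w_2\alpha_1)$ with $w_2\alpha_1=\alpha_1+a\alpha_2$, and $\alpha_j^{SL}=(w_1w_2)^j(w_1\alpha_2)$ with $w_1\alpha_2=b\alpha_1+\alpha_2$; carrying out these two remaining matrix-vector products and simplifying the second coordinate by (iii) replaces $\gamma_j$ with $\gamma_{j+1}$, giving $a\gamma_{j+1}$ and $b\gamma_{j+1}$ respectively. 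Table~\ref{smallm} provides a numerical sanity check.

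I do not expect a genuinely hard step here; this is a bookkeeping argument. The two points that demand care are keeping the index shifts straight --- the appearance of $\gamma_{j+1}$, rather than $\gamma_j$, in $\alpha_j^{LU}$ and $\alpha_j^{SL}$ --- and running the induction in both directions, via $(w_1w_2)^{-1}=w_2w_1$, so that the formulas hold for all $j\in\Z$ and not merely for $j\ge0$.
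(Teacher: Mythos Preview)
Your argument is correct: the Cayley--Hamilton approach to the powers of $[w_1w_2]_S$ and $[w_2w_1]_S$, together with the inductive identification of entries via relations (ii) and (iii), cleanly yields all four coordinate formulas as well as the recurrence (iv), and your treatment of negative $j$ via the inverse $(w_1w_2)^{-1}=w_2w_1$ is sound. Note that the paper does not supply its own proof of this lemma but simply cites \cite{ACP}, Lemmas~3.2 and~3.3; your write-up therefore stands on its own and is in the spirit of the argument one would expect from that reference.
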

Note that these are both generalized Fibonacci sequences provided that $ab>4$. In particular, $\gamma_j$ is the Lucas sequence with parameters $P=ab-2,Q=1$.

The following is a useful  lemma giving negatives of roots:
\begin{lemma}\label{L-neg} For all $j\in\mathbb{Z}$, $\gamma_{-j}=-\gamma_{j}$ and $\eta_{-j}=-\eta_{j-1}$. Also
$$
-\alpha^{LL}_j = \alpha^{LU}_{-j-1}, \quad -\alpha^{LU}_j = \alpha^{LL}_{-j-1},\quad 
-\alpha^{SU}_j = \alpha^{SL}_{-j-1}, \quad -\alpha^{SL}_j = \alpha^{SU}_{-j-1}.
$$
\end{lemma}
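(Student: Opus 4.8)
The plan is to prove the two scalar identities $\gamma_{-j}=-\gamma_j$ and $\eta_{-j}=-\eta_{j-1}$ first, and then deduce the four root identities from Lemma~\ref{acp} together with the fact that $-\alpha$ is obtained from $\alpha$ by negating both coordinates with respect to $S$. For the scalar identities I would argue by induction on $|j|$, using the recurrences in Lemma~\ref{acp}. The base cases $j=0,1$ are immediate: $\gamma_0=0=-\gamma_0$, and from Lemma~\ref{acp}(iii) with $j=1$ we get $\gamma_1=\eta_0-\gamma_0=1$, so I need to check $\gamma_{-1}=-1$ and $\eta_{-1}=-\eta_0=-1$; these follow by running the recurrences (ii) and (iii) ``backwards'' once, i.e.\ solving $\eta_0=ab\,\gamma_0-\eta_{-1}$ for $\eta_{-1}$ and $\gamma_0=\eta_{-1}-\gamma_{-1}$ for $\gamma_{-1}$.

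For the inductive step the cleanest route is to use the symmetric three-term recurrence (iv), $X_j=(ab-2)X_{j-1}-X_{j-2}$, which is invariant under $j\mapsto -j$ in the sense that it also reads $X_{j-2}=(ab-2)X_{j-1}-X_j$. Define $\gamma'_j:=-\gamma_{-j}$ and $\eta'_j:=-\eta_{-j-1}$; a direct substitution shows each of $\gamma'$ and $\eta'$ satisfies the recurrence (iv), and one checks the two initial values $\gamma'_0,\gamma'_1$ agree with $\gamma_0,\gamma_1$ and similarly $\eta'_0,\eta'_1$ agree with $\eta_0,\eta_1$ (using the backward-recurrence computation of $\gamma_{-1},\eta_{-1},\eta_{-2}$ from the previous paragraph). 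Since a second-order linear recurrence is determined by two consecutive values, $\gamma'_j=\gamma_j$ and $\eta'_j=\eta_j$ for all $j$, which is exactly the claimed pair of identities. (Alternatively, one can induct directly on $j\ge 1$ using (ii) and (iii): assuming $\gamma_{-j+1}=-\gamma_{j-1}$ and $\eta_{-j+1}=-\eta_{j-2}$, apply (iii) at index $-j+1$ and (ii) at index $-j+1$ to express $\gamma_{-j}$ and $\eta_{-j}$ and simplify; this is the same computation unpacked.)

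For the four root identities, substitute the scalar identities into the formulas of Lemma~\ref{acp}. For instance,
$$
-\alpha^{LL}_j = -\eta_j\alpha_1 - a\gamma_j\alpha_2 = \eta_{-j-1}\alpha_1 + a\gamma_{-j-1}\alpha_2 \cdot(-1)\cdot(-1),
$$
so I should write it carefully: $-\eta_j = \eta_{-j-1}$ (apply $\eta_{-m}=-\eta_{m-1}$ with $m=j+1$) and $-\gamma_j = \gamma_{-j}$, and compare with $\alpha^{LU}_{-j-1}=\eta_{-j-1}\alpha_1 + a\gamma_{-j}\alpha_2$; these match. The other three are identical bookkeeping: $-\alpha^{LU}_j$ versus $\alpha^{LL}_{-j-1}$ uses $\alpha^{LL}_{-j-1}=\eta_{-j-1}\alpha_1+a\gamma_{-j-1}\alpha_2$ and $\gamma_{-j-1}=-\gamma_{j+1}$ matched against the $a\gamma_{j+1}$ appearing in $\alpha^{LU}_j$; the two $S$-type identities are the same with the roles of $\alpha_1,\alpha_2$ and the constants $a,b$ swapped.

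The main obstacle is purely one of index bookkeeping: getting the shift in $\eta_{-j}=-\eta_{j-1}$ (note the asymmetry with $\gamma_{-j}=-\gamma_j$) to line up correctly with the index $-j-1$ appearing on the right-hand sides of the root identities, and making sure the backward initial values $\gamma_{-1}=-1$, $\eta_{-1}=-1$, $\eta_{-2}=-(ab-1)$ are computed consistently with the recurrences. There is no genuine difficulty beyond this; once the scalar identities are in hand the root identities are immediate from Lemma~\ref{acp}.
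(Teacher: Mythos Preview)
Your proposal is correct. The paper states Lemma~\ref{L-neg} without proof, so there is no argument to compare against directly; your approach via the second-order recurrence is sound. Verifying that $\gamma'_j:=-\gamma_{-j}$ and $\eta'_j:=-\eta_{-j-1}$ satisfy the recurrence (iv) of Lemma~\ref{acp} with matching initial values is the cleanest way to get the scalar identities, and the root identities then follow from the coordinate formulas exactly as you describe.

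One remark worth making: the four root identities can alternatively be proved directly from the Weyl-group definitions of the $\alpha^{XY}_j$, bypassing coordinates entirely. For instance, since $(w_2w_1)^{-1}=w_1w_2$, $w_2^2=1$, and $w_1\alpha_1=-\alpha_1$,
\[
\alpha^{LU}_{-j-1}=(w_2w_1)^{-j-1}w_2\alpha_1=(w_1w_2)^{j+1}w_2\alpha_1=(w_1w_2)^{j}w_1\alpha_1=-(w_1w_2)^{j}\alpha_1=-\alpha^{LL}_j,
\]
and the other three are one-line computations of the same kind. This is presumably why the authors omitted a proof. With the root identities in hand, the scalar identities then follow by reading off coordinates via Lemma~\ref{acp}, which inverts the order of your argument. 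Either route is entirely routine; yours has the minor advantage of making the index shift in $\eta_{-j}=-\eta_{j-1}$ explicit rather than hidden inside a Weyl-group word.
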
 

\section{Sums of real roots}\label{rootsums}
Let $\Delta$ be an infinite rank 2 root system of type $H(a,b)$ with $a\ge b$ and $ab\ge4$.
In this section we determine all real roots $\alpha,\beta\in\Delta$ for which  $\alpha+\beta$ is also a real root.

We will split our analysis into two cases: that in which $a\geq b >1$, and that in which $a>b=1$. We find in the first case that the sum of two real roots is never a real root, and in the second case that there are certain $\beta\in\realroots$ so that $\beta\pm \alpha_i \in \realroots$.

\subsection{The case $a\ge b>1$}

\begin{lemma}\label{staircase} If $a\ge b>1$, then
\begin{eqnarray*}
0=b\gamma_0<\eta_0<b\gamma_1<\eta_1<b\gamma_2<\cdots,\\
0=a\gamma_0<\eta_0<a\gamma_1<\eta_1<a\gamma_2<\cdots.
\end{eqnarray*}
In fact the gaps between sequence elements are nondecreasing, that is, for $j\ge0$,
\begin{align*}
 \eta_{j+1}-b\gamma_{j+1}&\ge b\gamma_{j+1}-\eta_j\ge\eta_j-b\gamma_j,\\
 \eta_{j+1}-a\gamma_{j+1}&\ge a\gamma_{j+1}-\eta_j\ge\eta_j-a\gamma_{j}.
\end{align*}
\end{lemma}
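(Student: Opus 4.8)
The plan is to treat both displayed ``staircase'' chains at once by introducing, for a parameter $t\in\{a,b\}$ and each integer $j\ge0$, the gap quantities
$$p_j(t):=\eta_j-t\gamma_j,\qquad q_j(t):=t\gamma_{j+1}-\eta_j.$$
The lemma then asserts precisely that the interleaved sequence $p_0(t),\,q_0(t),\,p_1(t),\,q_1(t),\,p_2(t),\dots$ is nondecreasing with positive first term: the inequalities $q_j(t)\ge p_j(t)$ and $p_{j+1}(t)\ge q_j(t)$ are the two gap inequalities in the statement (the first displayed line being the case $t=b$, the second the case $t=a$), while positivity of every term is exactly the chain $0=t\gamma_0<\eta_0<t\gamma_1<\eta_1<t\gamma_2<\cdots$. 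Since a nondecreasing sequence whose first term is positive has all terms positive, it suffices to establish the monotonicity together with $p_0(t)>0$.

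First I would record the elementary facts $0=\gamma_0<\gamma_1<\gamma_2<\cdots$ and $0<\eta_0<\eta_1<\eta_2<\cdots$. Both follow by induction from the shared recurrence $X_j=(ab-2)X_{j-1}-X_{j-2}$ of Lemma~\ref{acp}(iv): since $a\ge b>1$ forces $ab\ge4$, hence $ab-2\ge2$, the hypothesis $0\le X_{j-1}<X_j$ gives $X_{j+1}\ge 2X_j-X_{j-1}=X_j+(X_j-X_{j-1})>X_j$; the base cases $\gamma_0=0<1=\gamma_1$ and $\eta_0=1<ab-1=\eta_1$ are clear. In particular $\eta_j\ge1$ and $\gamma_{j+1}\ge1$ for every $j\ge0$.

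Next I would derive the two key difference identities. Using $\gamma_{j+1}+\gamma_j=\eta_j$ (Lemma~\ref{acp}(iii)) and $\eta_{j+1}+\eta_j=ab\gamma_{j+1}$ (Lemma~\ref{acp}(ii)), one computes directly
$$q_j(t)-p_j(t)=t(\gamma_{j+1}+\gamma_j)-2\eta_j=(t-2)\,\eta_j$$
and
$$p_{j+1}(t)-q_j(t)=(\eta_{j+1}+\eta_j)-2t\gamma_{j+1}=(ab-2t)\,\gamma_{j+1}.$$
This is where the hypothesis $a\ge b>1$ is used in full: it forces $t\ge2$ for $t\in\{a,b\}$, so $(t-2)\eta_j\ge0$; and it gives $ab-2b=b(a-2)\ge0$ as well as $ab-2a=a(b-2)\ge0$, so $ab-2t\ge0$ in both cases. (It is precisely the inequality $b\ge2$ that fails when $b=1$, which is why the lemma requires $b>1$.) Combined with the positivity of $\eta_j$ and $\gamma_{j+1}$ from the previous step, both right-hand sides are $\ge0$, so $p_j(t)\le q_j(t)\le p_{j+1}(t)$ for all $j\ge0$, which is the required monotonicity.

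Finally, $p_0(t)=\eta_0-t\gamma_0=1>0$, so by the monotonicity just established every term of $p_0(t),q_0(t),p_1(t),\dots$ is $\ge1$; reading off the resulting positivity and the inequalities $p_{j+1}(t)\ge q_j(t)\ge p_j(t)$ for $t=b$ and for $t=a$ yields both strict chains and both displayed gap statements. I expect no genuine obstacle beyond identifying the auxiliary quantities $p_j(t),q_j(t)$: once they are in hand the two difference identities collapse the whole claim to the arithmetic fact that $t\ge2$ and $ab-2t\ge0$ when $a\ge b>1$, and the only point demanding care is keeping the two non-symmetric cases $t=a$ and $t=b$ distinct, since the factor $ab-2t$ is not symmetric in them.
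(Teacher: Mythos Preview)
Your proof is correct and follows essentially the same route as the paper's: both arguments use Lemma~\ref{acp}(ii) and (iii) to rewrite the successive gap differences as $(t-2)\eta_j$ and $(ab-2t)\gamma_{j+1}$ (the paper phrases these as $(b-2)\eta_j$ and $(a-2)b\gamma_{j+1}$ in the $t=b$ case, which is the same thing), and then invoke $a\ge b\ge2$. Your version is in fact a bit more complete, since you explicitly establish the positivity of $\eta_j$ and $\gamma_{j+1}$ from the recurrence before using it, whereas the paper's chain of inequalities tacitly assumes this.
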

\begin{proof}
 To see that the gaps in the sequences are nondecreasing, we apply Lemma~\ref{acp} as follows: 
\begin{align*}
\eta_{j+1}-b\gamma_{j+1}&= (a-1)b\gamma_{j+1}-\eta_j
\ge b\gamma_{j+1}-\eta_m= (b-1)\eta_j-b\gamma_{j}\ge \eta_j-b\gamma_{j}. 
\end{align*} 
The other result is similar.
\end{proof}

The inequalities in Lemma~\ref{staircase} show that the real roots have the "staircase pattern'' shown in Figure~\ref{bgt1}.
\begin{figure}[h]
\begin{center}
{\includegraphics[width=3.4in]{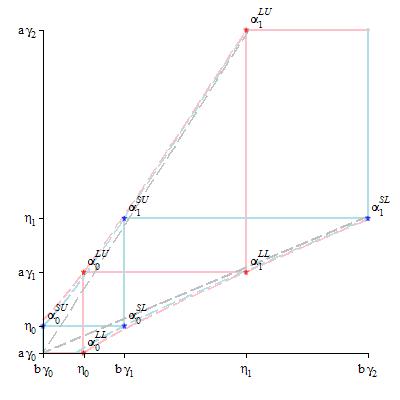}}
\caption{The positive real roots for $H(a,b)$ with $a\ge b>1$}
\label{bgt1}
\end{center}\end{figure}

\begin{proposition}\label{sumbne1}
If $a\ge b>1$ and $\a,\beta\in\Delta^\re$, then $\a+\b\notin\Delta^\re$.
\end{proposition}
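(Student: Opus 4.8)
The plan is to reduce to sums of \emph{positive} real roots and then argue with the two ``staircase'' sequences of Lemma~\ref{staircase}, one coordinate at a time. Suppose for contradiction that $\alpha,\beta,\gamma:=\alpha+\beta$ all lie in $\Delta^\re$. Since $\Delta^\re=-\Delta^\re$, inspecting the signs of $\alpha$ and $\beta$ shows that one of the four rearrangements $\gamma=\alpha+\beta$, $\alpha=\gamma+(-\beta)$, $\beta=\gamma+(-\alpha)$, $-\gamma=(-\alpha)+(-\beta)$ writes a positive real root as a sum of two positive real roots: use the first or the last when $\alpha,\beta$ have equal sign, and otherwise use whichever of the middle two has a positive right-hand side. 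So it suffices to prove that no positive real root is the sum of two positive real roots.

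Next I would pass to coordinates. Write positive real roots as $x\alpha_1+y\alpha_2$ with $x,y\ge0$. By Lemma~\ref{acp} the $\alpha_1$-coordinate runs exactly through the terms of the sequence $\mathcal X\colon 0=b\gamma_0<\eta_0<b\gamma_1<\eta_1<b\gamma_2<\cdots$ and the $\alpha_2$-coordinate exactly through $\mathcal Y\colon 0=a\gamma_0<\eta_0<a\gamma_1<\eta_1<a\gamma_2<\cdots$, and by Lemma~\ref{staircase} both sequences are strictly increasing with nondecreasing gaps, hence superadditive. Moreover, again by Lemma~\ref{acp}, each positive value of $\mathcal X$ is the $\alpha_1$-coordinate of exactly two positive real roots, whose $\alpha_2$-coordinates differ by $a(\gamma_{j+1}-\gamma_j)\ge a\ge2$ (if that value is $\eta_j$) or by $\eta_j-\eta_{j-1}\ge ab-2\ge2$ (if it is $b\gamma_j$); symmetrically, two positive real roots sharing an $\alpha_2$-coordinate differ in the $\alpha_1$-coordinate by at least $2$.

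Now let $\gamma'=\alpha'+\beta'$ be positive real roots, $\alpha'=x_1\alpha_1+y_1\alpha_2$, $\beta'=x_2\alpha_1+y_2\alpha_2$, $\gamma'=X\alpha_1+Y\alpha_2$. If $x_1=0$ then $\alpha'=\alpha_2$ and $\beta'=\gamma'-\alpha_2$ is a positive real root sharing the $\alpha_1$-coordinate of $\gamma'$ but with $\alpha_2$-coordinate one less, which the previous paragraph forbids unless $X=0$, in which case $\beta'=\alpha_2$ and $\gamma'=2\alpha_2$ is not a root. The same applies to $x_2,y_1,y_2$, so $x_1,x_2,y_1,y_2\ge1$. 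Then $X=x_1+x_2$ with $X,x_1,x_2$ positive terms of $\mathcal X$; a short computation from the recurrence of Lemma~\ref{acp}(iv) shows that the only way a term of $\mathcal X$ equals a sum of two positive earlier terms is $\mathcal X_2=\mathcal X_1+\mathcal X_1$, which happens precisely when $b=2$. If $b\ge3$ this is already a contradiction. If $b=2$ and $a\ge3$ it forces $x_1=x_2=1$, so $\alpha',\beta'\in\{\alpha_1,\,\alpha_1+a\alpha_2\}$ and $\gamma'\in\{(2,0),(2,a),(2,2a)\}$; but the positive real roots with $\alpha_1$-coordinate $2$ are $(2,1)$ and $(2,\,2a-1)$, so $\gamma'$ is none of them. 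The only remaining parameter is the affine case $a=b=2$, handled directly: there every positive real root $x\alpha_1+y\alpha_2$ satisfies $|x-y|=1$, whereas a sum of two such has $|X-Y|\in\{0,2\}$ and so cannot be a positive real root.

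I expect the main obstacle to be the claim in the last step that the staircases admit no nontrivial additive coincidence other than $\mathcal X_2=\mathcal X_1+\mathcal X_1$: proving this uniformly in $(a,b)$ requires controlling the growth of the gaps through Lemma~\ref{acp}(iv), and it is exactly the small parameters — above all the degenerate affine system $H(2,2)$, where $\mathcal X$ and $\mathcal Y$ collapse to $0,1,2,3,\dots$ and superadditivity becomes vacuous — that must be separated off and treated by hand.
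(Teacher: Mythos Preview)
Your reduction to a sum of positive real roots is fine, and your treatment of the boundary cases ($x_1=0$, etc.) and of the two small-parameter cases $b=2,a\ge3$ and $a=b=2$ is correct. But the heart of your argument is the assertion that ``a short computation from the recurrence of Lemma~\ref{acp}(iv) shows that the only way a term of $\mathcal X$ equals a sum of two positive earlier terms is $\mathcal X_2=\mathcal X_1+\mathcal X_1$'', and this is not a short computation. The recurrence in Lemma~\ref{acp}(iv) governs $\eta_j$ and $\gamma_j$ separately, not the interleaved sequence $\mathcal X$; to rule out all additive coincidences you would need something like $\mathcal X_{k+1}>2\mathcal X_k$, which does hold for $b\ge3$ (one checks inductively that $\eta_j>2b\gamma_j$ and $b\gamma_{j+1}>2\eta_j$) but fails for $b=2$ (e.g.\ $\mathcal X_4=4a-4<2\mathcal X_3=4a-2$), so a genuinely separate argument is required there. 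Your claim is true, but as written this step is a gap.

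The paper's proof avoids this entirely by using the Weyl group: writing $\alpha=w\alpha_i$ and replacing $\beta$ by $w^{-1}\beta$, one reduces to asking whether $\beta\pm\alpha_i\in\Delta^{\re}$ for some positive real $\beta$. That is exactly the question you answer in your ``$x_1=0$'' paragraph --- two positive real roots sharing one coordinate differ by at least $2$ in the other --- and \emph{that} already finishes the proof. In other words, your step~3 is the paper's whole argument; your step~4 is unnecessary extra work, created by reducing to ``positive $+$ positive'' instead of to ``simple $+$ arbitrary''.
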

\begin{proof}
We can write $\a=w\a_i$ for $i=1$ or 2, and some $w\in W$. 
We may also replace $\beta$ by $w^{-1}\beta$. Thus we wish to  determine the $\beta\in\Delta^\text{re}$ for which $\alpha_i+\beta\in\Delta^\re$.

Replacing $\beta$ by $-\beta$ if $\beta\in\Delta^\re_-$, we wish to determine the $\beta\in\Delta_+^\text{re}$ for which $\alpha_i\pm\beta\in\Delta^\re$.
Thus we may take $\a=\pm\a_i$ and $\b\in\Delta^\re_+$.
 
From Figure~\ref{bgt1}, it is clear that $\b\pm\a_i\in\Delta^\re$ only when  one of the differences $\eta_{j+1}-\eta_j$ or
$\gamma_{j+1}-\gamma_j$ equals 1. By Lemma~\ref{staircase}, this cannot occur.
\end{proof}

\subsection{The case $a>b=1$}
This case  is considerably more intricate. 

We define real functions
$\Psi_\pm(x) := \frac12\left((x-2)\pm\sqrt{x(x-4)}\right)$,
for which $\psi_\pm=\Psi_\pm(ab)$ are the characteristic roots of the recurrence equation in
Lemma~\ref{acp}(iv).

The following lemma gives a bound on these parameters. 
\begin{lemma}\label{bound} If $ab>4$, then $\psi_+>2.61$ and $0<\psi_-< 0.39$. 
\end{lemma}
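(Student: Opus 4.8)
The plan is to analyze the function $\Psi_+(x) = \frac12\left((x-2)+\sqrt{x(x-4)}\right)$ directly as a function of the real variable $x \geq 5$, since $ab > 4$ forces $ab \geq 5$ (recall $a, b$ are positive integers with $ab > 4$). First I would observe that $\Psi_+$ is increasing on $[5,\infty)$: both $x - 2$ and $x(x-4) = (x-2)^2 - 4$ are increasing there, so their sum under the square root and the linear term are both increasing, hence so is $\Psi_+$. Therefore $\psi_+ = \Psi_+(ab) \geq \Psi_+(5) = \frac12(3 + \sqrt{5})$. It then suffices to check numerically that $\frac12(3 + \sqrt{5}) > 2.61$, i.e. that $3 + \sqrt 5 > 5.22$, i.e. $\sqrt 5 > 2.22$, i.e. $5 > 4.9284$, which is true. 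This gives $\psi_+ > 2.61$.

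For the bound on $\psi_-$, I would use the fact that $\psi_+$ and $\psi_-$ are the two roots of $X^2 - (ab-2)X + 1 = 0$ (the characteristic equation of the recurrence $X_j = (ab-2)X_{j-1} - X_{j-2}$), so $\psi_+ \psi_- = 1$ and $\psi_+ + \psi_- = ab - 2 > 0$; in particular $\psi_- = 1/\psi_+ > 0$. Combining with the lower bound $\psi_+ > 2.61$ just established, we get $\psi_- = 1/\psi_+ < 1/2.61 < 0.39$ (since $2.61 \cdot 0.39 = 1.0179 > 1$). This handles both inequalities.

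Alternatively, and perhaps more cleanly for the writeup, one can avoid invoking the recurrence and just note that $\Psi_+(x)\Psi_-(x) = \frac14\left((x-2)^2 - x(x-4)\right) = \frac14 \cdot 4 = 1$ as an identity in $x$, so $\psi_- = 1/\psi_+$ and the positivity of $\psi_-$ is immediate; then the same numerical comparison finishes it. Either route works.

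The only "obstacle" here is purely bookkeeping: making sure the monotonicity claim for $\Psi_+$ on $[5,\infty)$ is stated correctly (it fails to be real-valued for $0 < x < 4$, so one must restrict attention to $x \geq 5$, which is exactly the range forced by $ab \geq 5$), and then confirming the two decimal inequalities $\frac12(3+\sqrt5) > 2.61$ and $1/2.61 < 0.39$ by hand. Both are elementary. I expect the entire proof to be three or four lines.
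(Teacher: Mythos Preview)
Your proof is correct and follows essentially the same route as the paper: use monotonicity of $\Psi_\pm$ on $[5,\infty)$ to reduce to the value at $x=5$, then check the numerical inequalities. The only cosmetic differences are that the paper establishes monotonicity via the derivative $\Psi'_\pm(x)=\frac12\bigl(1\mp\frac{x-2}{\sqrt{x(x-4)}}\bigr)$ rather than your direct ``sum of increasing functions'' argument, and bounds $\psi_-$ by showing $\Psi_-$ is decreasing rather than invoking $\psi_+\psi_-=1$. Your use of the Vieta relation is arguably tidier, since it gets the sharp bound $\psi_-<1/2.61<0.39$ in one line (the paper's written proof in fact only records $\psi_-<0.45$, though $\Psi_-(5)=\tfrac12(3-\sqrt5)\approx0.382$ does give the claimed $0.39$).
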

\begin{proof}

We can  find $$\Psi'_\pm(x)=\frac12\left(1\mp\frac{x-2}{\sqrt{x(x-4)}}\right)$$
and so $\Psi_+$ is increasing for $x\ge5$ and $\Psi_-$ is positive and decreasing for $x\ge5$.
Hence $\psi_+\ge\Psi_+(5)>2.61$ and $0<\psi_-\le\Psi_-(5)<0.45.$
\end{proof}
Define
$$ \lambda := \frac{\psi_+}{\psi_+-1}, \qquad \mu:=\frac{1}{\sqrt{ab(ab-4)}}.$$
The following lemma is an easy consequence of Lemma~\ref{bound} and shows that the sequences $\eta_j$ and $\gamma_j$ are each within a small constant of being exponential with base $\psi_+$.
\begin{lemma} If $ab>4$, then, for $j\ge0$,
\begin{align*}
\lambda\psi_+^j-1.62<\eta_j&<\lambda\psi_+^j,\\
\mu\psi_+^j-0.45<\gamma_j&<\mu\psi_+^j,
\end{align*}
where $\psi_+>2.61$, $1 < \lambda < 1.62$, and $0 < \mu < 0.45$.
\end{lemma}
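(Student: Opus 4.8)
The plan is to solve the second-order linear recurrence of Lemma~\ref{acp}(iv) explicitly and read the inequalities off the closed forms. Note first that $ab\ge 5$, since $ab$ is an integer with $ab>4$. The characteristic polynomial of $X_j=(ab-2)X_{j-1}-X_{j-2}$ is $x^2-(ab-2)x+1$, whose roots are exactly $\psi_\pm=\Psi_\pm(ab)$. Three elementary identities carry the argument: $\psi_++\psi_-=ab-2$, $\psi_+\psi_-=1$, and hence $\psi_+-\psi_-=\sqrt{(ab-2)^2-4}=\sqrt{ab(ab-4)}=\mu^{-1}$. By Lemma~\ref{bound} we have $0<\psi_-<1<\psi_+$, so the roots are distinct and every solution of the recurrence has the form $A\psi_+^j+B\psi_-^j$.

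First I would pin down $\gamma_j$: the initial conditions $\gamma_0=0$, $\gamma_1=1$ force $A=-B=1/(\psi_+-\psi_-)=\mu$, so $\gamma_j=\mu(\psi_+^j-\psi_-^j)$. For $\eta_j$, rather than solving a second linear system (which would involve the less convenient value $\eta_1=ab-1$), I would use the identity $\eta_j=\gamma_j+\gamma_{j+1}$, which is Lemma~\ref{acp}(iii) applied at index $j+1$. Substituting the formula for $\gamma$ and collecting powers of $\psi_\pm$ gives $\eta_j=\mu(\psi_++1)\psi_+^j-\mu(\psi_-+1)\psi_-^j$. A one-line manipulation using $\psi_+\psi_-=1$, namely $\psi_+-\psi_-=(\psi_+^2-1)/\psi_+=(\psi_+-1)(\psi_++1)/\psi_+$, identifies the leading coefficient as $\mu(\psi_++1)=\psi_+/(\psi_+-1)=\lambda$, and then $\mu(\psi_-+1)=\mu(\psi_++1)-\mu(\psi_+-\psi_-)=\lambda-1$. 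Thus $\eta_j=\lambda\psi_+^j-(\lambda-1)\psi_-^j$.

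It then remains to bound the constants and the error terms. From $ab\ge5$ we get $0<\mu=1/\sqrt{ab(ab-4)}\le1/\sqrt{5}<0.45$, and since $\psi_+>2.61$ by Lemma~\ref{bound}, the quantity $\lambda=1+(\psi_+-1)^{-1}$ is decreasing in $\psi_+$, whence $1<\lambda\le1+1/1.61<1.62$. For every $j\ge0$ we have $0<\psi_-^j\le1$ because $0<\psi_-<1$, so the subtracted terms satisfy $0<\mu\psi_-^j\le\mu<0.45$ and $0<(\lambda-1)\psi_-^j\le\lambda-1<1.62$. Feeding these bounds into $\gamma_j=\mu\psi_+^j-\mu\psi_-^j$ and $\eta_j=\lambda\psi_+^j-(\lambda-1)\psi_-^j$ yields all four inequalities of the lemma, each strict. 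The only step that is not pure bookkeeping is matching the coefficient $\mu(\psi_++1)$ that falls out of the computation with the quantity $\lambda$ as it is defined in the text; the rest is elementary algebra with the two root identities together with the crude numerical bounds already established in Lemma~\ref{bound}.
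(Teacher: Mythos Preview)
Your argument is correct and is exactly what the paper has in mind when it calls the lemma ``an easy consequence of Lemma~\ref{bound}'': write the closed forms $\gamma_j=\mu(\psi_+^j-\psi_-^j)$ and $\eta_j=\lambda\psi_+^j-(\lambda-1)\psi_-^j$, then bound the $\psi_-^j$ terms using $0<\psi_-^j\le1$ together with $\mu<0.45$ and $\lambda<1.62$. The identification $\mu(\psi_++1)=\lambda$ via $\psi_+\psi_-=1$ is clean and correct; note incidentally that your estimate $(\lambda-1)\psi_-^j\le\lambda-1$ actually gives the sharper bound $\lambda-1<0.62$, so the constant $1.62$ in the lemma is generous.
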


The following lemma shows that the roots have the "staircase pattern'' shown in Figure~\ref{beq1}.

\begin{lemma} \label{staircase2} If $a>4$ and $b=1$, then
\begin{align*}
0&=\gamma_0<\eta_0=\gamma_1<\gamma_2<\eta_1<\gamma_3<\eta_2<\cdots,\\
0&=a\gamma_0<\eta_0<\eta_1<a\gamma_1<\eta_2<a\gamma_2<\eta_3<a\gamma_3<\cdots.
\end{align*}
\end{lemma}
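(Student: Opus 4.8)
The plan is to verify both chains of inequalities by reducing everything to a handful of base cases together with the recurrence $X_j=(ab-2)X_{j-1}-X_{j-2}$ from Lemma~\ref{acp}(iv), which here (since $b=1$) reads $X_j=(a-2)X_{j-1}-X_{j-2}$ with $a>4$. Since the sequences $\eta_j$ and $a\gamma_j$ both satisfy this recurrence, so does any $\Z$-linear combination of shifted terms; thus each individual inequality we need, say $P_j>0$ where $P_j$ is some fixed difference like $\eta_j-\gamma_{j+1}$ or $a\gamma_j-\eta_{j-1}$, is itself governed by the recurrence. The strategy for each such $P_j$ is: (a) compute $P_0$ and $P_1$ explicitly from Lemma~\ref{acp}(i) and Table~\ref{smallm}; (b) observe that if $P_{j-1}>0$ and $P_j>0$ with $P_j\ge P_{j-1}$, then $P_{j+1}=(a-2)P_j-P_{j-1}\ge (a-3)P_j>0$ since $a>4$, and moreover $P_{j+1}-P_j=(a-3)P_j-P_{j-1}\ge (a-3)P_j-P_j=(a-4)P_j>0$, so the pair $(P_j,P_{j+1})$ again satisfies the hypotheses; (c) conclude by induction that $P_j>0$ (indeed strictly increasing) for all $j\ge$ the starting index. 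So the whole lemma splits into finitely many such elementary monotone-recurrence arguments, one per inequality appearing in the two displayed chains.

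Concretely, for the first chain I would treat the interleaving $\gamma_j<\eta_{j-1}<\gamma_{j+1}$ for $j\ge 2$ (the cases $0=\gamma_0<\eta_0=\gamma_1<\gamma_2<\eta_1$ being read off directly from $\gamma_0=0,\eta_0=1,\gamma_1=1,\gamma_2=ab-2=a-2,\eta_1=ab-1=a-1$, using $a>4$). Set $P_j:=\eta_{j-1}-\gamma_j$ and $Q_j:=\gamma_{j+1}-\eta_{j-1}$. From Lemma~\ref{acp}(iii), $\gamma_{j+1}=\eta_j-\gamma_j$, so $Q_j=\eta_j-\gamma_j-\eta_{j-1}$; and $P_{j+1}-P_j$ etc.\ are again recurrence-driven. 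Check $P_1=\eta_0-\gamma_1=0$, $P_2=\eta_1-\gamma_2=(a-1)-(a-2)=1$, and $Q_1=\gamma_2-\eta_0=a-3>0$, $Q_2=\gamma_3-\eta_1$, which from Table~\ref{smallm} (with $b=1$) equals $(a^2-4a+3)-(a-1)=a^2-5a+4=(a-1)(a-4)>0$; then push forward with the monotone-recurrence step. For the second chain, the relevant quantities are $a\gamma_j-\eta_j$, $\eta_{j+1}-a\gamma_j$, and $a\gamma_j-\eta_{j-1}$; each is handled identically, reading the base values $j=0,1$ off $\gamma_0=0,\gamma_1=1,\eta_0=1,\eta_1=a-1$ and Table~\ref{smallm}.

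The one place requiring a little more care — and the main obstacle — is pinning down the \emph{correct starting index} for each interleaving inequality, because the two chains begin with a few irregular terms before the pattern stabilizes (e.g.\ $\eta_0=\gamma_1$ rather than $\eta_0<\gamma_1$, and $\eta_0<\eta_1<a\gamma_1$ rather than the eventual $a\gamma_j<\eta_{j+1}<a\gamma_{j+1}$ spacing). So I would first hand-verify the chains explicitly for $j=0,1,2$ using the displayed formulas and $a\ge5$, and only then invoke the inductive monotone step for $j\ge2$; the induction itself is then completely routine given the uniform sign/growth lemma in step (b) above. An alternative, cleaner route would be to quote the previous lemma bounding $\eta_j$ and $\gamma_j$ between $\lambda\psi_+^j$ (resp.\ $\mu\psi_+^j$) and that value minus a small constant: since $\mu\psi_+>\lambda$ by a direct numerical check (using $\psi_+>2.61$, $1<\lambda<1.62$, $0<\mu<0.45$ — indeed $\mu=1/\sqrt{a(a-4)}$ and $\lambda=\psi_+/(\psi_+-1)$ can be compared exactly as functions of $a$), one gets $a\gamma_{j+1}>a\mu\psi_+^{j+1}-0.45a$ comfortably exceeding $\eta_j<\lambda\psi_+^j$ for $j$ large, reducing to finitely many small-$j$ checks; but since $a$ can be as small as $5$ the constants are tight, so I would fall back on the exact recurrence argument for safety.
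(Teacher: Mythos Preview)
Your recurrence-plus-monotone-induction approach is correct and is genuinely different from the paper's proof. The paper handles the ``easy'' halves of each chain exactly as you do (e.g.\ $\gamma_{j+1}=\eta_j-\gamma_j<\eta_j$ and $\eta_{j+1}=a\gamma_{j+1}-\eta_j<a\gamma_{j+1}$, directly from Lemma~\ref{acp}(ii),(iii)), but for the ``hard'' halves---namely $\eta_j>a\gamma_{j-1}$ and $\gamma_{j+1}>\eta_{j-1}$---it invokes the exponential bounds from the preceding lemma, establishing $\lambda\psi_+-a\mu>2$ and $\mu\psi_+^2-\lambda>3$ numerically and then estimating the differences from below. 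In other words, the paper takes precisely the route you describe as the ``alternative, cleaner route'' and then dismiss as having tight constants.

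Your primary route avoids the analytic machinery entirely: once you observe that any fixed difference such as $\eta_{j-1}-\gamma_j$ or $\eta_{j+1}-a\gamma_j$ itself satisfies $X_j=(a-2)X_{j-1}-X_{j-2}$, the single monotone step you isolate (if $0<P_{j-1}\le P_j$ then $0<P_j<P_{j+1}$, using $a\ge5$) cleanly propagates positivity from two checked base cases. This is more elementary and entirely integer-arithmetic; the price is that you must verify the handful of irregular initial terms explicitly (as you correctly flag). The paper's exponential-bounds argument buys a more uniform treatment at the cost of depending on Lemma~\ref{bound} and the $\lambda,\mu$ estimates. Both arguments are short; yours is self-contained.
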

\begin{proof} For the first two inequalities, the previous lemmas show  that $0<\mu<0.45$ and $1 <\lambda<1.62$. Also, we have: 
$$
\lambda\psi^+-a\mu = \frac{\psi_+^2}{\psi_+-1} - \frac{a}{\sqrt{a(a-4)}} \geq 2.62-0.45 >2 
$$ and 
\begin{align*}
\mu\psi_+^2-\lambda  = \frac{\psi_+^2}{\sqrt{a(a-4)}} - \frac{\psi_+}{\psi_+-1} 
> 0.45\psi_+^2 - 0 > 3. 
\end{align*}

For $j\geq 1$ we have $\gamma_{j+1} = \eta_j - \gamma_j < \eta_j $. Similarly for $j\ge0$ we have $\eta_{j+1} = a\gamma_{j+1}-\eta_{j} < a\gamma_{j}$. Now
\begin{align*}
\eta_j-a\gamma_{j-1} &> \lambda\psi_+^j-1.62 -a\mu\psi_+^{j-1}\\
&= (\lambda\psi^+-a\mu)\psi^{j-1}-1.62\\
&\ge 2\psi^{j-1}-1.62>0 
\end{align*}
and so  $a \gamma_j < \eta_{j+1} $. And
\begin{align*}
\gamma_{j+1}-\eta_{j-1}&> \mu\psi_+^{j+1}-0.45-\lambda\psi_+^{j-1} \\
&= \psi_+^{j-1}(\mu\psi_+^2-\lambda)-0.45 \\
&\geq 3\psi_+^{j-1} - 0.45 > 0 
\end{align*}
and so $\gamma_{j+1}>\eta_{j-1}$. \end{proof}
\begin{figure}
\begin{center}
{\includegraphics[width=3.4in]{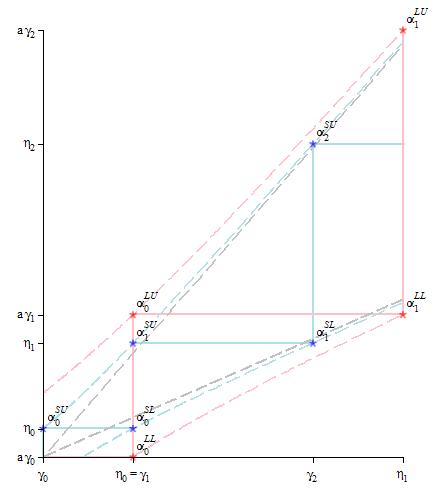}}
\caption{The positive real roots for $H(a,1)$ with $a>4$}
\label{beq1}
\end{center}\end{figure}
%

 We now use the above results to determine the $\beta\in\realroots$ for which $\beta\pm \alpha_i\in\realroots$, for $i=1,2$. 

\newpage
\begin{theorem}
\label{beq1}
If $a\ge4$, $b=1$ and $\beta\in\Delta_+^\text{re}$ then
\begin{enumerate}
\item $\beta+\alpha_1\in\Delta^\text{re}$ if and only if  $\beta=\alpha_2$;
\item $\beta-\alpha_1\in\Delta^\text{re}$ if and only if  $\beta=\alpha_1+\alpha_2$;
\item $\beta+\alpha_2\in\Delta^\text{re}$ if and only if  $\beta=\alpha_1$ or $\alpha_1+(a-1)\alpha_2$;
\item $\beta-\alpha_2\in\Delta^\text{re}$ if and only if  $\beta=\alpha_1+\alpha_2$ or $\alpha_1+a\alpha_2$.
\end{enumerate}
\end{theorem}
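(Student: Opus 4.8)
The plan is to reduce the four statements to a finite search among small roots, using the staircase picture of Figure~\ref{beq1} together with the explicit formulas of Lemma~\ref{acp}. As in the proof of Proposition~\ref{sumbne1}, the key observation is that if $\beta=m_1\alpha_1+m_2\alpha_2$ and $\beta\pm\alpha_i$ is again a real root, then the coordinates of $\beta$ and of $\beta\pm\alpha_i$ must both appear among the coordinate pairs $(\eta_j,a\gamma_j)$, $(\eta_j,a\gamma_{j+1})$, $(\gamma_j,\eta_j)$, $(\gamma_{j+1},\eta_j)$ listed in Lemma~\ref{acp} (recall $b=1$). Adding or subtracting $\alpha_1$ changes only the first coordinate by $1$; adding or subtracting $\alpha_2$ changes only the second. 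So the question becomes: for which $j$ does one of the sequences $\gamma_j$, $\eta_j$, $a\gamma_j$ (as an entry of the coordinate pair) differ by exactly $1$ from a neighbouring entry of the \emph{other} coordinate-sequence appearing in a valid root pair?

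First I would use Lemma~\ref{staircase2}, which totally orders the relevant sequence values for $a>4$, $b=1$. In the long-root coordinate pairs $(\eta_j,a\gamma_j)$ and $(\eta_j,a\gamma_{j+1})$, the ordering $0=a\gamma_0<\eta_0<\eta_1<a\gamma_1<\eta_2<a\gamma_2<\cdots$ shows that consecutive values differ by $1$ only at the very bottom: $a\gamma_0=0$, $\eta_0=1$, and then the gaps grow (one checks $\eta_1-\eta_0=a-2\ge 2$, $a\gamma_1-\eta_1=a-a+1\cdots$ — more precisely the exponential bounds of the lemma preceding Lemma~\ref{staircase2} force every later gap to exceed $1$). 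Similarly for the short-root coordinate pairs $(\gamma_j,\eta_j)$, $(\gamma_{j+1},\eta_j)$ the ordering $0=\gamma_0<\eta_0=\gamma_1<\gamma_2<\eta_1<\gamma_3<\eta_2<\cdots$ shows consecutive differences of $1$ occur only among $\gamma_0=0$, $\gamma_1=\eta_0=1$, $\gamma_2=a-2$ (if $a=3$; but here $a\ge4$ so $\gamma_2\ge2$), so again only at the bottom. Thus $\beta$ and $\beta\pm\alpha_i$ must both lie among the finitely many roots with $j\in\{-1,0,1\}$, which are exactly those tabulated in Table~\ref{T-rts} (with $b=1$): namely $\pm\alpha_1$, $\alpha_1+a\alpha_2$, $\pm\alpha_2$, $\alpha_1+\alpha_2$, $\alpha_1+(a-1)\alpha_2$, $\alpha_1+a\alpha_2$, and their negatives, together with the $j=1$ roots $(a-1)\alpha_1+a\alpha_2$, $(a-1)\alpha_1+a(a-2)\alpha_2$, $\alpha_1+(a-1)\alpha_2$, $(a-2)\alpha_1+(a-1)\alpha_2$.

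Then I would simply enumerate: for each positive real root $\beta$ in this finite list, check whether $\beta+\alpha_1$, $\beta-\alpha_1$, $\beta+\alpha_2$, $\beta-\alpha_2$ is again in the (full, not just finite) list of real roots, reading coordinates off Lemma~\ref{acp}. For instance $\alpha_2+\alpha_1=\alpha_1+\alpha_2=\alpha^{SL}_0$ is a short root, giving (i); $(\alpha_1+\alpha_2)-\alpha_1=\alpha_2$ gives (ii); $\alpha_1+\alpha_2=\alpha^{LU}_{-1}\cdots$ wait, rather $\alpha_1+\alpha_2$ has second coordinate $1=\gamma_1$ so adding $\alpha_2$ is controlled by whether $2$ is a sequence value — it is not for $a\ge4$ — whereas $\alpha_1+(a-1)\alpha_2$ plus $\alpha_2$ equals $\alpha_1+a\alpha_2=\alpha^{LU}_0$, giving the second alternative in (iii); and so on for (iv). The one subtlety is confirming that no root with larger $|j|$ can sneak in: this is where the exponential lower bounds on the gaps (Lemma preceding \ref{staircase2}, and Lemma~\ref{staircase2} itself) are essential, and verifying the base cases $a=4$ (and checking the affine-boundary behaviour) by hand is the only genuinely fiddly part. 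I expect that finite case-check to be the main obstacle — not conceptually hard, but requiring care that the inequalities are strict and that the short/long coordinate sequences have been matched up correctly in Lemma~\ref{acp}.
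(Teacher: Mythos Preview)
Your proposal is correct and follows essentially the same approach as the paper: both arguments use Lemma~\ref{acp} to write $\beta$ in $(\eta_j,\gamma_j)$-coordinates, observe that adding or subtracting a simple root shifts one coordinate by $1$, and then invoke the staircase ordering of Lemma~\ref{staircase2} to force the relevant index $j$ to be small, reducing to a finite check against Table~\ref{T-rts}. The paper organises this as an explicit case split over the four branches for $\beta$ and the four branches for $\beta\pm\alpha_i$, whereas you phrase it as ``localise to small $j$, then enumerate''; these are the same argument.
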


\begin{proof} For (i),  note that if $\beta = \alpha_2$, then $\beta+\alpha_1 = \alpha_1+\alpha_2 = \alpha_0^{SL}$, by Table 1.  Thus $\beta+\alpha_1$ is  a real root. 

Conversely, let $\beta\in\realroots$ such that $\beta+\alpha_1 \in\realroots$. 
First suppose that $\beta=\alpha_j^{SU}$ for some $j$. By Lemma~\ref{acp}, we have 
$\beta_j = \gamma_j\alpha_1 + \eta_j\alpha_2.$ Then $$\beta+\alpha_1 = (\gamma_j+1)\alpha_1+\eta_j\alpha_2.$$
If $\beta+\alpha_1$ is a long root on a lower branch, then again by Lemma~\ref{acp} we have 
$\beta+\alpha_1 = \eta_k \alpha_1 + a\gamma_k\alpha_2$ for some $k$. 
Then we must have $\gamma_{j}+1 = \eta_k$ and $\eta_j = a\gamma_k$, but by Lemma~\ref{staircase}  there are no $j,k$ such that $\eta_j = a\gamma_k$. Similarly, there are no $j,k$ such that $\gamma_j+1 = \eta_k$ and 
$\eta_j = a\gamma_{k+1}$, so $\beta+\alpha_1$ cannot be a long root on an upper branch. 
If $\beta+\alpha_1$ is a short root on an upper branch, then Lemma~\ref{acp} implies that $\gamma_j+1 = \gamma_k $ and $\eta_j = \eta_k$, but again, 
no such $j,k$ can exist: if $\beta+\alpha_1$ 
is a short root on an upper branch, then we must have $j,k$ so that
$$ \gamma_j+1 = \gamma_{k+1} \,\,\, \text{and} \,\,\, \eta_j = \eta_k.$$
The second of these conditions implies that $j=k$. Then by the first of these conditions we have $\gamma_{j+1}-\gamma_j = 1$. Then $j$ must be $0$. So 
$$\beta+\alpha_1 = \alpha_0^{SL} = \alpha_1 + \alpha_2,$$
so $\beta$ is $\alpha_2$. Similarly we may check that if $\beta = \alpha_{j}^{LL}$, $\beta=\alpha_j^{LU}$, or $\beta=\alpha_j^{SL}$ for some $j>0$, then $\beta+\alpha_1 \not\in\realroots$. 

Following similar reasoning, we can check that the second, third and fourth claims hold. \end{proof}

By a straightforward case-by-case argument, we can now prove the following result about lengths of sums of roots.

\begin{thm} \label{sums}
Let $\Delta$ be an infinite rank 2 root system.
\begin{enumerate}
\item If $\alpha,\beta,\alpha+\beta\in\Delta^{\re}$ with $\alpha$ and $\beta$ short, then $\alpha+\beta$ is long.
\item If $\alpha,\beta,\alpha+\beta\in\Delta^{\re}$ with $\alpha$ short and $\beta$ long, then $\alpha+\beta$ is short.
\item If $\alpha,\beta\in\Delta^{\re}$ with $\alpha$ and $\beta$ long, then $\alpha+\beta\notin\Delta^{\re}$.
\end{enumerate}
\end{thm}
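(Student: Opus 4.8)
The plan is to reduce the three assertions to the sums $\beta\pm\alpha_i$ that were already analyzed in Proposition~\ref{sumbne1} and Theorem~\ref{beq1}, and then to read off the length of each root from the explicit parametrization in Lemma~\ref{acp} and Table~\ref{T-rts}. The guiding principle is that ``long'' and ``short'' are exactly the two $W$-orbits $W\alpha_1$ and $W\alpha_2$, so length type is invariant under any operation that preserves $W$-orbits.

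First I would set up the standard reduction. Writing $\alpha=w\alpha_i$ for some $w\in W$ and $i\in\{1,2\}$ and replacing the triple by $(w^{-1}\alpha,w^{-1}\beta,w^{-1}(\alpha+\beta))=(\alpha_i,w^{-1}\beta,\alpha_i+w^{-1}\beta)$, each of $\alpha,\beta,\alpha+\beta$ retains its length type, since $w^{-1}$ preserves each orbit. Replacing a root by its negative also preserves length type: by Lemma~\ref{L-neg} negation interchanges the sequences $LL\leftrightarrow LU$ and $SU\leftrightarrow SL$, hence keeps each root in its orbit. So, exactly as in the proof of Proposition~\ref{sumbne1}, I may assume $\alpha=\pm\alpha_i$ and $\beta\in\Delta_+^{\re}$, in which case $\alpha+\beta=\beta\pm\alpha_i$ and the length types of the reduced $\alpha,\beta,\alpha+\beta$ agree with those of the originals.

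If $b>1$, then Proposition~\ref{sumbne1} says $\alpha+\beta\notin\Delta^{\re}$ for \emph{every} pair of real roots. This makes the hypotheses of (i) and (ii) vacuous (and in the symmetric case $a=b$ there are no short roots at all), and it gives (iii) immediately. The remaining case is $a>b=1$, where $\alpha_1$ is long and $\alpha_2$ is short, and I would run through Theorem~\ref{beq1}. Using $b=1$, $\gamma_1=1$ and $\eta_1=ab-1=a-1$, the roots appearing in that theorem are identified from Lemma~\ref{acp} and Table~\ref{T-rts} as
\[
\alpha_1+\alpha_2=\alpha_0^{SL}\ (\text{short}),\qquad
\alpha_1+(a-1)\alpha_2=\alpha_1^{SU}\ (\text{short}),\qquad
\alpha_1+a\alpha_2=\alpha_0^{LU}\ (\text{long}).
\]
The pairs $(\alpha,\beta)$ realizing a real sum are then $(\alpha_1,\alpha_2)$ and $(-\alpha_1,\alpha_0^{SL})$, giving long $+$ short $=$ short; $(\alpha_2,\alpha_1)$ and $(-\alpha_2,\alpha_0^{LU})$, giving short $+$ long $=$ short; and $(\alpha_2,\alpha_1^{SU})$ and $(-\alpha_2,\alpha_0^{SL})$, giving short $+$ short $=$ long. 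The first four instances are exactly (ii) and the last two are exactly (i). Since no pair of \emph{long} roots ever produces a real sum here, (iii) also holds when $b=1$, completing the proof.

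I expect the main obstacle to be the bookkeeping in the case $a>b=1$: one must correctly classify each non-simple root in the lists of Theorem~\ref{beq1} --- above all $\alpha_1+(a-1)\alpha_2$, which is \emph{short} (it equals $\alpha_1^{SU}$) despite its large $\alpha_2$-coefficient --- and must keep track of how the Weyl-group and negation reductions match the length types of the reduced pair to those of the original roots. Getting the direction of (ii) right hinges precisely on recognizing that $\alpha_1+(a-1)\alpha_2$ is short, so that short $+$ long indeed yields a short root rather than a long one.
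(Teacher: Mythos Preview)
Your proposal is correct and is exactly the ``straightforward case-by-case argument'' the paper invokes without writing out: reduce via the Weyl group and negation to $\alpha=\pm\alpha_i$, $\beta\in\Delta_+^{\re}$, apply Proposition~\ref{sumbne1} when $b>1$, and in the case $b=1$ read off the length types of the six pairs arising from Theorem~\ref{beq1} using the parametrization in Lemma~\ref{acp}. Your identifications of $\alpha_1+\alpha_2=\alpha_0^{SL}$, $\alpha_1+(a-1)\alpha_2=\alpha_1^{SU}$, and $\alpha_1+a\alpha_2=\alpha_0^{LU}$ are all correct, so the bookkeeping you flag as the main obstacle is handled properly.
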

We note that (i) and (iii) are not true in finite root systems of type $A_2$ or $G_2$. However there is a slightly weaker result that holds in any symmetrizable system:
\begin{thm}
Let $\Delta$ be a symmetrizable root system and suppose $\alpha,\beta,\alpha+\beta\in\Delta^{\re}$.
\begin{enumerate}
\item If $||\alpha||^2=||\beta||^2$, then $||\alpha+\beta||^2=a||\alpha||^2$ for some positive integer $a$.
\item If $||\alpha||^2\ne||\beta||^2$, then $||\alpha+\beta||^2=\min(||\alpha||^2,||\beta||^2)$.
\end{enumerate}
\end{thm}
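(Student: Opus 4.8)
The plan is to reduce to a direct computation using the symmetrization of the Cartan matrix and the elementary identity for norms of sums. Recall that for a symmetrizable root system the bilinear form $(\cdot,\cdot)$ is $W$-invariant, and every real root $\alpha$ satisfies $(\alpha,\alpha)>0$; moreover, since the form is the symmetrization of an integral generalized Cartan matrix, the quantity $2(\alpha,\beta)/(\beta,\beta)$ is an integer whenever $\beta$ is a real root (this is the "Cartan integer'' $\langle\alpha,\beta^\vee\rangle$, which takes integer values because $\beta^\vee$ lies in the coroot lattice and $\alpha$ in the root lattice). Expanding,
\[
\|\alpha+\beta\|^2 = \|\alpha\|^2 + \|\beta\|^2 + 2(\alpha,\beta).
\]
Since $\alpha+\beta$ is assumed to be a real root, the left side is positive; I will exploit this together with the integrality of the Cartan pairing.

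For part (i), suppose $\|\alpha\|^2=\|\beta\|^2=:c$. Write $2(\alpha,\beta)=mc$ where $m=\langle\alpha,\beta^\vee\rangle\in\mathbb{Z}$ (valid since $\|\beta\|^2=c$ makes the normalization symmetric). Then $\|\alpha+\beta\|^2=(2+m)c$. Because $\|\alpha+\beta\|^2>0$ and $c>0$, we get $2+m\ge1$, so $2+m$ is a positive integer; setting $a:=2+m$ gives the claim. (One could also invoke the Cauchy–Schwarz-type bound $|2(\alpha,\beta)|\le\|\alpha\|^2+\|\beta\|^2$ with strict inequality unless $\alpha=\pm\beta$, to see $m\in\{-1,0,1,2\}$ and hence $a\in\{1,2,3,4\}$, but for the stated result only positivity of $a$ is needed.)

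For part (ii), suppose $\|\alpha\|^2\ne\|\beta\|^2$; without loss of generality $\|\alpha\|^2<\|\beta\|^2$, so I must show $\|\alpha+\beta\|^2=\|\alpha\|^2$. Here the key input is that in a symmetrizable Kac–Moody root system there are only finitely many root lengths, and adjacent lengths are related by the off-diagonal Cartan entries; in the rank $2$ situation relevant to this paper the two orbits $W\alpha_1$ and $W\alpha_2$ have $\|\alpha_1\|^2=2a/b$ and $\|\alpha_2\|^2=2$. I would argue as follows: using that $\beta$ is real, $p:=\langle\alpha,\beta^\vee\rangle=2(\alpha,\beta)/\|\beta\|^2\in\mathbb{Z}$, so $2(\alpha,\beta)=p\|\beta\|^2$ and
\[
\|\alpha+\beta\|^2=\|\alpha\|^2+(1+p)\|\beta\|^2.
\]
Since $\|\alpha+\beta\|^2$ is the squared length of a real root, it must equal one of the two available values $\|\alpha\|^2$ or $\|\beta\|^2$. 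If it equalled $\|\beta\|^2$ we would need $\|\alpha\|^2 = p'\|\beta\|^2$ for an integer-ish relation forcing $\|\alpha\|^2\ge\|\beta\|^2$ (using $\|\alpha\|^2>0$), contradicting $\|\alpha\|^2<\|\beta\|^2$; hence $\|\alpha+\beta\|^2=\|\alpha\|^2=\min(\|\alpha\|^2,\|\beta\|^2)$. Symmetrically, using $q:=\langle\beta,\alpha^\vee\rangle\in\mathbb{Z}$ and $\|\alpha+\beta\|^2=(1+q)\|\alpha\|^2+\|\beta\|^2$ pins down the same conclusion and also shows $1+q>0$.

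The main obstacle is part (ii): making rigorous the step "the length of a real root must be one of finitely many values, and those values are incommensurable enough that $\min$ is forced.'' In the general symmetrizable setting this requires citing the standard fact (e.g.\ Kac, \emph{Infinite-dimensional Lie algebras}, Ch.\ 5) that the real roots split into finitely many $W$-orbits with $\|\alpha\|^2$ taking the finitely many values $\{2d_i\}$ where $d_i$ are the symmetrizing diagonal entries, together with the classification of which off-diagonal entries can be nonzero; for the two-length case this reduces, as above, to a short arithmetic argument with the Cartan integers $p,q$. I would also double-check the edge behaviour: parts (i) and (iii) fail in types $A_2$ and $G_2$ precisely because there $\alpha+\beta$ can fail to be a root (for (iii)) or can have a strictly larger length unavailable in the infinite setting; since this theorem only asserts something \emph{when} $\alpha+\beta$ is already known to be a real root, those counterexamples do not intrude here, but I would remark on it to forestall confusion.
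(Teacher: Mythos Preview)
Your argument for part (i) is correct and in fact more direct than the paper's: the paper reduces to the rank~2 subsystem $\mathbb{Z}\{\alpha,\beta\}\cap\Delta$ and then does a case split (finite types $A_2$, $B_2$, $G_2$ checked by hand; infinite types via the preceding Theorem~\ref{sums}), whereas your identity $\|\alpha+\beta\|^2=(2+m)\|\alpha\|^2$ with $m=\langle\alpha,\beta^\vee\rangle\in\mathbb{Z}$ and positivity of the left side gives the conclusion in one line.

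Part (ii), however, has a genuine gap. Your argument needs the assertion that $\|\alpha+\beta\|^2\in\{\|\alpha\|^2,\|\beta\|^2\}$, and you justify this by saying a symmetrizable root system has only finitely many real root lengths. That is true but insufficient: a general symmetrizable $\Delta$ can have three or more real root lengths (just take a generalized Cartan matrix whose symmetrizing diagonal has three distinct entries), so nothing you have written forces $\|\alpha+\beta\|^2$ to be one of the two lengths $\|\alpha\|^2,\|\beta\|^2$ rather than some third value. The patch you propose in the final paragraph---citing that the lengths are the values $2d_i$---does not close this gap either; knowing the list of possible lengths in $\Delta$ still does not single out $\|\alpha\|^2$ and $\|\beta\|^2$.

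The missing step is exactly the paper's first move: pass to the rank~2 subsystem $\Delta':=\mathbb{Z}\{\alpha,\beta\}\cap\Delta$. In $\Delta'$ the real roots form at most two $W'$-orbits, hence at most two lengths, and since $\alpha,\beta\in\Delta'$ have distinct lengths these are precisely $\|\alpha\|^2$ and $\|\beta\|^2$. With that reduction in place, your arithmetic then works cleanly: if $\|\alpha+\beta\|^2=\|\beta\|^2$ then $\|\alpha\|^2=-p\,\|\beta\|^2$ with $-p\ge1$, contradicting $\|\alpha\|^2<\|\beta\|^2$. This is a tidy alternative to the paper's case-by-case verification (finite types plus Theorem~\ref{sums}), but it only becomes valid after you have restricted to rank~2.
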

\begin{proof}
We only need to consider the rank 2 subsystem $\mathbb{Z}\{\a,\b\}\cap\Delta$. These results are easily shown to be true if the subsystem has finite type $A_2$, $B_2$, or $G_2$, and they follow from the previous proposition if the subsystem is infinite.
\end{proof}


\section{Subsystems}\label{Rk2subsys}
Root systems can be used to describe three different structures: Coxeter groups, Kac--Moody algebras and Kac--Moody groups. 
These three structures lead to two different concepts of 
subsystem, since the Lie correspondence ensures that Kac--Moody algebras and  groups give the same subsystems.

In this section, we describe two distinct types of root subsystem and show  that the two concepts usually coincide, but not always. We also classify all subsystems of infinite rank 2 root systems.

 Suppose $\Delta$ is a symmetrizable root system with simple roots $\Pi=\{\alpha_1,\dots ,\alpha_{\ell}\}$.
Let $W=W(\Delta)$ be the Weyl group and  let $\Delta^\re=W\Pi$ denote the real roots.

For  $\Gamma\subseteq\Delta^\re$,  the  {\it reflection subgroup} generated by $\Gamma$ is defined as
  $$W_\Gamma = \langle w_\alpha : \alpha \in \Gamma \rangle.$$
Then $W_\Gamma$ is also a Coxeter group. 
We define
 $$\Phi(\Gamma)= W_\Gamma\Gamma,$$ 
that is, the closure of $\Gamma$ under the action of $W_\Gamma$.
We call this a \emph{$\Phi$-subsystem} (also noted in [C], Proposition 7). Note that a $\Phi$-subsystem consists entirely of real roots.

Let $\alpha$ be any real root. Then there is a corresponding pair of root vectors $x_\alpha$ and $x_{-\alpha}$ in $\mathfrak{g}=\mathfrak{g}(\Delta)$ which generate a subalgebra
isomorphic to $\mathfrak{sl}_2$.
We denote this subalgebra by $\mathfrak{sl}_2(\alpha)$.
Now let $\Gamma\subseteq \Delta^{\re}$. We may define the fundamental Kac--Moody subalgebra corresponding to $\Gamma$ to be 
  $$\mathfrak{g}_\Gamma = \langle \mathfrak{h}, \mathfrak{sl}_2(\alpha) : \alpha \in \Gamma \rangle.$$

 Then $\mathfrak{g}_\Gamma$ is a Kac--Moody algebra and its root system is
$$\Delta(\Gamma)= \mathbb{Z}\Gamma \cap \Delta,$$
that is the set of all roots in $\Delta$ that can be written as an integer linear combination of elements of $\Gamma$.
We call this a \emph{$\Delta$-subsystem}. The Kac--Moody subalgebra of \cite{FN}, Theorem 3.1 is of this type. We also define $\Delta^\re(\Gamma)=\mathbb{Z}\Gamma\cap\Delta^\re$.

\subsection{Subsystems corresponding to submatrices}
In this section we discuss one of the easiest ways to construct subsystems, and show that $\Delta$- and $\Phi$-subsystems coincide in this case.
Let $\mathfrak{g}$ be a Kac--Moody algebra with generalized Cartan matrix $A=(a_{ij})_{i,j\in I}$, $I=\{1,2,\dots ,\ell\}$, Cartan subalgebra $\mathfrak{h}$ of dimension $2\ell-\rank(A)$, simple roots $\Pi=\{\alpha_1,\dots,\alpha_{\ell}\}\subseteq \mathfrak{h}^{\ast}$ and simple 
coroots $\Pi^{\vee}=\{\alpha_1^{\vee},\dots,\alpha_{\ell}^{\vee}\} \subseteq\mathfrak{h}$. Let $Q$ denote the root lattice of $\mathfrak{g}$ and let $\mathfrak{g}=\mathfrak{h}\oplus\left(\bigoplus_{\alpha\in Q\backslash\{0\}}\mathfrak{g}^{\alpha}\right)$ denote the root space decomposition. Let $\Delta$ denote the set of all roots.

\medskip 
\noindent  Let $B=(a_{ij})_{i,j\in K}$ be a submatrix of $A$ for some $K\subseteq I$ with $|K|=\ell_0$. Let $\mathfrak{h}(B)$ be a subspace of $\mathfrak{h}$ of dimension $2\ell_0-\rank(A_0)$ containing $\Pi(B)^{\vee}=\{\alpha_i^{\vee} \mid i\in K\},$ and such that $\Pi(B)=\{\alpha_i|_{\mathfrak{h}(B)^{\ast}} \mid i \in K\}$ is linearly independent. 
Set $Q_0=\bigoplus_{i\in K} \mathbb{Z}\alpha_i$. Then 
$$\mathfrak{g}_0\cong \mathfrak{h}(B)\oplus\left(\bigoplus_{\alpha\in Q_0\backslash\{0\}} \mathfrak{g}^{\alpha}\right)$$
is the Kac--Moody algebra of $B$ with Cartan subalgebra $\mathfrak{h}(B)$, simple roots $\Pi(B)$ and simple coroots $\Pi(B)^{\vee}$ ([K], Exercise 1.2).
We identify $Q_0\subseteq Q\subset \mathfrak{h}^\ast$ with $\Z\Pi(B)\subset\mathfrak{h}(B)^\ast$ in the obvious way.







\begin{proposition} (Proposition 6, [Mo]) $\;$Let $A=(A_{ij})_{i,j\in I}$ be a generalized Cartan matrix with $I=\{1,2,\dots ,\ell\}$. Let 
$K\subset I$, $K\neq \varnothing$. Let $\Pi(A)=\{\alpha_1,\dots ,\alpha_{\ell}\}$ be the simple roots of the root system $\Delta(A)$. Let $B=(A_{ij})_{i,j\in K}$. Let $\Delta(B)$ denote the root system corresponding to $B$. Let 
$\Pi(B)=\Pi(A)\cap\Delta(B)$. Then $\Pi(B)\neq\varnothing$ and $\Delta(B)$  has the properties:
$$\Delta(B)=\Delta\cap \Z\Pi(B)$$
where $\Z\Pi(B)$ denotes all integral linear combinations of $\Pi(B)$
and
\begin{align*}
\Delta(B)^{\re}&=\Delta^{\re}\cap \Z\Pi(B),\\
\Delta(B)^{\im}&=\Delta^{\im}\cap \Z\Pi(B).
\end{align*}
\end{proposition}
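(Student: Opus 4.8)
The plan is to derive all four assertions by comparing the root space decomposition of $\mathfrak g=\mathfrak g(A)$ with that of the subalgebra $\mathfrak g_0$ described above, which by [K], Exercise~1.2 is the Kac--Moody algebra $\mathfrak g(B)$ realized inside $\mathfrak g$ as $\mathfrak g_0=\mathfrak h(B)\oplus\bigoplus_{\alpha\in Q_0\setminus\{0\}}\mathfrak g^\alpha$, where $Q_0=\bigoplus_{j\in K}\Z\alpha_j$. Write $\Pi_K=\{\alpha_j:j\in K\}$, so $\Z\Pi_K=Q_0$. First I would record two elementary points. Since the simple roots $\alpha_1,\dots,\alpha_\ell$ are linearly independent in $\mathfrak h^\ast$, we have $\alpha_i\in Q_0$ precisely when $i\in K$. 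And, by the standing hypothesis that $\Pi(B)=\{\alpha_i|_{\mathfrak h(B)}:i\in K\}$ is linearly independent, the restriction map $Q_0\to\mathfrak h(B)^\ast$ is injective; hence the identification $Q_0\cong\Z\Pi(B)$ is unambiguous, and under it the $Q$-grading of $\mathfrak g_0$ agrees with its $\mathfrak h(B)$-weight decomposition, the weight space of $\mathfrak g_0$ of weight $\alpha|_{\mathfrak h(B)}$ being exactly $\mathfrak g^\alpha$ for each $\alpha\in Q_0\setminus\{0\}$.

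It then follows immediately that the root system of $\mathfrak g(B)=\mathfrak g_0$ is
\[
\Delta(B)=\{\alpha\in Q_0\setminus\{0\}:\mathfrak g^\alpha\ne 0\}=\Delta\cap Q_0 .
\]
Intersecting with $\Pi(A)$ and using the linear independence of the $\alpha_i$ gives $\Pi(A)\cap\Delta(B)=\Pi(A)\cap Q_0=\Pi_K$, which is nonempty because $K\ne\varnothing$; as $\Z\Pi_K=Q_0$, this identifies $\Pi(B)$ with $\Pi_K$ and rewrites the display as $\Delta(B)=\Delta\cap\Z\Pi(B)$, giving the first two claims.

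For the refinement into real and imaginary roots I would use the standard fact that, in any Kac--Moody algebra, a root $\gamma$ is real if and only if $2\gamma$ is not a root: if $\gamma=w\alpha_i$ then $2\alpha_i\notin\Delta$ and $W$ permutes $\Delta$, while every imaginary root has all of its nonzero integer multiples in $\Delta$ (a basic property of imaginary roots, cf.\ [K], \S5.2). Since $\alpha\in Q_0$ implies $2\alpha\in Q_0$, applying this criterion inside $\mathfrak g(B)$ and inside $\mathfrak g$ yields
\[
\Delta(B)^{\re}=\{\alpha\in\Delta(B):2\alpha\notin\Delta(B)\}=\{\alpha\in\Delta\cap Q_0:2\alpha\notin\Delta\}=\Delta^{\re}\cap Q_0=\Delta^{\re}\cap\Z\Pi(B),
\]
and then $\Delta(B)^{\im}=\Delta(B)\setminus\Delta(B)^{\re}=(\Delta\cap Q_0)\setminus(\Delta^{\re}\cap Q_0)=\Delta^{\im}\cap\Z\Pi(B)$. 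The only substantive ingredient is [K], Exercise~1.2 --- that the subalgebra generated by $\mathfrak h(B)$ and the $e_i,f_i$ ($i\in K$) fills out every root space $\mathfrak g^\alpha$ with $\alpha\in Q_0$ --- which is already taken as given above; the rest is bookkeeping with the identification $Q_0\cong\Z\Pi(B)$, and the only mildly delicate point, the real/imaginary split, is dispatched by the ``$2\gamma\notin\Delta$'' criterion, so I anticipate no real obstacle.
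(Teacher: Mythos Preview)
The paper does not supply its own proof of this proposition: it is quoted verbatim as Proposition~6 of Moody~[Mo] and simply invoked. So there is nothing in the paper to compare your argument against.

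That said, your proof is correct and self-contained. The identification $\Delta(B)=\Delta\cap Q_0$ follows immediately from the description of $\mathfrak g_0$ furnished by [K], Exercise~1.2 (which the paper also takes as given), and your computation $\Pi(A)\cap\Delta(B)=\Pi_K$ via linear independence of the simple roots is exactly right. The only step with any content is the real/imaginary split, and your device---$\gamma$ is real in a Kac--Moody root system if and only if $2\gamma$ is not a root---is a clean way to do it: it is intrinsic to the root system (so it transfers from $\Delta$ to $\Delta(B)$ once one knows $\Delta(B)=\Delta\cap Q_0$ and $Q_0$ is closed under doubling), and it does not require symmetrizability. An alternative route would be to show directly that $W(B)\Pi(B)=\Delta^{\re}\cap Q_0$, but the nontrivial inclusion there (that a real root of $\Delta$ lying in $Q_0$ is already $W(B)$-conjugate to some $\alpha_j$, $j\in K$) is exactly what your criterion sidesteps.
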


\begin{proposition}  Using the notation above, define $\Phi(B)=W_{\Pi(B)}(\Pi(B))$. Then 
$$\Delta(B)^{\re}=\Phi(B).$$
That is, $\Phi(B)$ and $\Delta(B)^{\re}$ subsets coincide. 
\end{proposition}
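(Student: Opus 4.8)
The plan is to show the two-way inclusion $\Phi(B)\subseteq\Delta(B)^{\re}$ and $\Delta(B)^{\re}\subseteq\Phi(B)$, using the previous Proposition to identify $\Delta(B)^{\re}$ with $\Delta^{\re}\cap\Z\Pi(B)$. For the first inclusion, note that every element of $\Phi(B)$ has the form $w\alpha$ with $w\in W_{\Pi(B)}$ and $\alpha\in\Pi(B)$; since each generator $w_\beta$ of $W_{\Pi(B)}$ with $\beta\in\Pi(B)$ maps $\Z\Pi(B)$ into itself (because $w_\beta(\gamma)=\gamma-(\gamma,\beta^\vee)\beta$ stays in the $\Z$-span of $\Pi(B)$), we get $W_{\Pi(B)}(\Z\Pi(B))\subseteq\Z\Pi(B)$, so $\Phi(B)\subseteq\Z\Pi(B)$. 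On the other hand $\Phi(B)$ consists of real roots of $\Delta$ (each $w\alpha$ with $\alpha\in\Pi(B)\subseteq\Pi(A)$ a simple root of the ambient system is in $\Delta^{\re}$). Hence $\Phi(B)\subseteq\Delta^{\re}\cap\Z\Pi(B)=\Delta(B)^{\re}$.

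For the reverse inclusion, I would argue that $\Delta(B)^{\re}$ is exactly the set of real roots of the Kac--Moody algebra $\mathfrak{g}_0$ of $B$, and that the Weyl group of $\mathfrak{g}_0$ is naturally identified with $W_{\Pi(B)}$ acting on $\Z\Pi(B)=\mathfrak{h}(B)^\ast$-span of the simple roots. By the standard description of real roots of a Kac--Moody algebra, every real root of $\mathfrak{g}_0$ is $W(\mathfrak{g}_0)$-conjugate to a simple root of $\mathfrak{g}_0$, i.e.\ to some $\alpha_i$ with $i\in K$; that is, $\Delta(B)^{\re}=W(\mathfrak{g}_0)\,\Pi(B)=W_{\Pi(B)}(\Pi(B))=\Phi(B)$. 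The identification of $W(\mathfrak{g}_0)$ with $W_{\Pi(B)}$ comes from the fact that the simple reflections of $\mathfrak{g}_0$ are precisely the $w_{\alpha_i}$, $i\in K$, restricted to $\Z\Pi(B)$, and restriction is injective there since $\Pi(B)$ is linearly independent in $\mathfrak{h}(B)^\ast$.

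The main obstacle I anticipate is making the identification $W(\mathfrak{g}_0)\cong W_{\Pi(B)}$ airtight: one must check that the reflection $w_{\alpha_i}$ of the ambient $W$, when restricted to $\Z\Pi(B)$, coincides with the simple reflection of $W(\mathfrak{g}_0)$ associated to $\alpha_i$, and that no relations are lost or gained. This is a routine but slightly delicate point about how the bilinear form and coroots of $B$ sit inside those of $A$, and it is essentially guaranteed by the construction of $\mathfrak{g}_0$ recalled just above (with $\Pi(B)^\vee\subseteq\mathfrak{h}(B)$ and $\Pi(B)$ linearly independent). Once that is in place, the two inclusions close the argument. Alternatively, and perhaps more cleanly, one can avoid $\mathfrak{g}_0$ entirely: given a real root $\gamma\in\Delta^{\re}\cap\Z\Pi(B)$, write $\gamma=w\alpha_k$ for some $w\in W$, $k\in I$; a standard argument using minimal-length coset representatives (or induction on the height of $\gamma$ relative to $\Pi(B)$, reflecting by some $w_{\alpha_i}$, $i\in K$, that strictly decreases it) shows $\gamma\in W_{\Pi(B)}(\Pi(B))$, which is the desired conclusion. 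I would present whichever of these is shorter given the tools already developed in the paper.
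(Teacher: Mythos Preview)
Your proposal is correct and follows essentially the same route as the paper: prove both inclusions, using $\Delta(B)^{\re}=\Delta^{\re}\cap\Z\Pi(B)$ from the preceding Proposition. The paper's own proof is extremely terse---it simply asserts that $\Delta(B)^{\re}\subseteq\Phi(B)$ is ``clear'' and then writes out the easy inclusion $\Phi(B)\subseteq\Delta(B)^{\re}$---whereas you spell out \emph{why} the first inclusion holds (namely, $\Delta(B)^{\re}$ is by construction the set of real roots of the Kac--Moody algebra $\mathfrak{g}_0$, hence equals $W(\mathfrak{g}_0)\Pi(B)$, and $W(\mathfrak{g}_0)$ is identified with $W_{\Pi(B)}$). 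Your concern about making the identification $W(\mathfrak{g}_0)\cong W_{\Pi(B)}$ airtight is well placed but not a real obstacle here: the simple reflections of $\mathfrak{g}_0$ are defined via the pairing $\langle\alpha_j,\alpha_i^\vee\rangle=a_{ij}$ for $i,j\in K$, which agrees with the restriction of the ambient reflections, so the identification is immediate from the setup already recalled in the paper. Your alternative height-induction argument would also work but is unnecessary given that framework.
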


\begin{proof}  Our claim is that
$$\Delta^{\re}\cap \Z\Pi(B) = W_{\Pi(B)}(\Pi(B)).$$
The inclusion $\Delta(B)^{\re}\subseteq \Phi(B)$ is clear. To prove the reverse inclusion, let $\alpha\in \Phi(B)$. Then $\alpha\in \Delta^{\re}$ and $\alpha\in \Z\Pi(B)$ by definition. Hence $\Phi(B)\subseteq \Delta(B)^{\re}$.
\end{proof}

The following lemma establishes a useful property of $\Delta(B)$ subsystems.

\begin{lemma} $\Delta(B)^{\re}$ subsets are closed  under taking integral sums of the simple roots corresponding to $B$. 
\end{lemma}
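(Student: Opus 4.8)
The plan is to deduce the lemma directly from the two preceding propositions, which together identify $\Delta(B)^{\re}$ with $\Delta^{\re}\cap\Z\Pi(B)$ — a set that is visibly stable under any operation which keeps us inside both $\Delta^{\re}$ and the lattice $\Z\Pi(B)$. So there is essentially nothing to do beyond unwinding definitions and invoking the right half of the earlier statement.

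First I would fix notation: write $\Pi(B)=\{\alpha_i : i\in K\}$ for the simple roots attached to $B$. Recall from Morita's Proposition~6 (the first proposition of this subsection) that $\Delta(B)=\Delta\cap\Z\Pi(B)$ and $\Delta(B)^{\re}=\Delta^{\re}\cap\Z\Pi(B)$, and from the following proposition that $\Delta(B)^{\re}=\Phi(B)=W_{\Pi(B)}(\Pi(B))$; in particular $\Delta(B)^{\re}\subseteq\Z\Pi(B)$, which is the one inclusion we actually need.

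The key step is then immediate. Suppose $\beta_1,\dots,\beta_m\in\Delta(B)^{\re}$ and $n_1,\dots,n_m\in\Z$ are such that $\alpha:=\sum_{t=1}^{m}n_t\beta_t$ lies in $\Delta^{\re}$. Each $\beta_t$ lies in $\Z\Pi(B)$, so the integral combination $\alpha$ again lies in the lattice $\Z\Pi(B)$; hence $\alpha\in\Delta^{\re}\cap\Z\Pi(B)=\Delta(B)^{\re}$. Taking the $\beta_t$ to be the simple roots $\alpha_i\in\Pi(B)$ themselves recovers the statement exactly as phrased, and the same argument applies verbatim to integral sums of arbitrary real roots of the subsystem. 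Equivalently, one may argue with the identity $\Delta(B)=\Delta\cap\Z\Pi(B)$: the sum lies in $\Delta\cap\Z\Pi(B)=\Delta(B)$ and is real, hence in $\Delta(B)^{\re}$.

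I expect no genuine obstacle here; the only point requiring care is to cite the correct direction of the preceding proposition, namely $\Delta(B)^{\re}\subseteq\Z\Pi(B)$, so that integral combinations of roots in the subsystem never leave $\Z\Pi(B)$ — after which membership in $\Delta(B)^{\re}$ follows automatically from the hypothesis that the combination is a real root of the ambient system.
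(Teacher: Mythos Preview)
Your argument is correct and follows essentially the same route as the paper's proof: both invoke the identification $\Delta(B)^{\re}=\Delta^{\re}\cap\Z\Pi(B)$ and note that the lattice $\Z\Pi(B)=Q_0$ is closed under integral combinations. Your version is a bit more explicit about the hypothesis that the combination lands back in $\Delta^{\re}$, but the substance is identical.
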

\begin{proof} We have $\Delta(B)^{\re}=\Delta^{\re}\cap \Z\Pi(B)$, but since $B$ is a subsystem arising from a submatrix of the generalized Cartan matrix, $B$ has an associated root lattice $Q_0=\bigoplus_{i\in K} \mathbb{Z}\alpha_i$ which is closed under taking integral sums. (See also Section 4, in particular Lemma 6, of [C]).
\end{proof}

Since $\Phi(B)$ and $\Delta(B)^{\re}$ subsets coincide, we claim  that $\Phi(B)$ subsets have this property as well. 

\begin{proposition} Using the notation above,  $\Phi(B)$ subsets  are closed  under taking integral sums of the simple roots corresponding to $B$.

\end{proposition}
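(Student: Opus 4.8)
The plan is to deduce this immediately from the two results that precede it, so the ``proof'' is essentially a one-line reduction. We have just established (the preceding Proposition) that $\Phi(B) = \Delta(B)^{\re}$ as subsets of $\Delta^{\re}$, and the Lemma immediately before that asserts that $\Delta(B)^{\re}$ is closed under taking integral sums of the simple roots $\Pi(B)$ corresponding to $B$. Combining these, $\Phi(B)$ inherits exactly the same closure property.

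More precisely, I would argue as follows. Suppose $\alpha_{i_1},\dots,\alpha_{i_m}\in\Pi(B)$ and $c_1,\dots,c_m\in\Z$ are such that $\beta := \sum_{t=1}^m c_t\alpha_{i_t}\in\Delta^{\re}$. Since $\Pi(B)\subseteq\Z\Pi(B)$ and $\Z\Pi(B)=Q_0$ is a $\Z$-module, $\beta\in\Z\Pi(B)$; hence $\beta\in\Delta^{\re}\cap\Z\Pi(B)=\Delta(B)^{\re}$ by the Proposition describing $\Delta(B)$ in terms of submatrices. By the preceding Proposition, $\Delta(B)^{\re}=\Phi(B)$, so $\beta\in\Phi(B)$. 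This is precisely the assertion that $\Phi(B)$ is closed under integral sums of the simple roots corresponding to $B$.

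There is essentially no obstacle here, since all the content was packaged into the earlier Lemma and Proposition; the only thing to be careful about is the quantifier structure — ``closed under taking integral sums of the simple roots corresponding to $B$'' should be read as: whenever an integral combination of the $\Pi(B)$ is a (real) root of $\Delta$, it already lies in $\Phi(B)$. One might also remark that the same statement then holds with $\Pi(B)$ replaced by all of $\Delta(B)^{\re}$, since $\Z\Delta(B)^{\re}\subseteq\Z\Pi(B)=Q_0$, so any integral combination of roots of the subsystem that happens to be a real root of $\Delta$ again lands in $\Phi(B)=\Delta(B)^{\re}$. I would include that remark as a short corollary-style sentence, because it is the form of the closure property that is actually used later when comparing $\Phi$- and $\Delta$-subsystems.

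\begin{proof}
Let $\beta=\sum_{i\in K} c_i\alpha_i$ with $c_i\in\Z$, and suppose $\beta\in\Delta^{\re}$. Since $\Z\Pi(B)=Q_0$ is a $\Z$-module containing $\Pi(B)$, we have $\beta\in\Z\Pi(B)$, and therefore $\beta\in\Delta^{\re}\cap\Z\Pi(B)=\Delta(B)^{\re}$. By the previous proposition, $\Delta(B)^{\re}=\Phi(B)$, so $\beta\in\Phi(B)$. Hence $\Phi(B)$ is closed under taking integral sums of the simple roots corresponding to $B$. Moreover, since every element of $\Delta(B)^{\re}=\Phi(B)$ lies in $\Z\Pi(B)$, the same argument shows that any integral combination of elements of $\Phi(B)$ which is a real root of $\Delta$ already belongs to $\Phi(B)$.
\end{proof}
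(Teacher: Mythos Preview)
Your proof is correct and rests on the same reduction as the paper's: both arguments use the identification $\Phi(B)=\Delta(B)^{\re}$ from the preceding proposition, together with the fact that integral combinations of elements of $\Z\Pi(B)$ remain in $\Z\Pi(B)$. The paper phrases the computation in terms of $\alpha$-root strings through $\beta$ for $\alpha,\beta\in\Phi(B)$, writing out the coordinates explicitly to see that each element of the string lies in $\Z\Pi(B)$; your version bypasses the root-string language and argues directly with an arbitrary integral combination, which is slightly cleaner but logically equivalent.
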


\begin{proof} Let $\alpha,\beta\in \Phi(B)$.
We recall that $$-p\a+\beta,\ \dots,\ \b-\a,\ \beta, \ \a+\b, \ \dots ,\ q\a+\beta$$ is the $\alpha$--string through $\beta$. We claim that for $s,t\in\Z$, $s\alpha+t\beta\in \Z\Pi(B)$. Let
$$\alpha=\sum_{i\in K} a_i\alpha_i \quad\text{and  \ }
\beta=\sum_{i\in K} b_i\alpha_i.$$
Writing elements of the root string in terms of their coordinates on the root lattice $Q_0$, it is clear that they are all elements of $\Z\Pi(B)$. For example
$-p\a+\beta=\sum_{i\in K}(-pa_i+b_i)\alpha_i.$
\end{proof}


\subsection{Classification of $\Phi$-subsystems in rank 2}
Our next step is to classify the $\Phi$-subsystems in any infinite rank 2 root system using explicit formulas for the Weyl group reflections. Let $\Delta$ be a root system of type $H(a,b)$ for $a\ge b$ and $ab\ge4$.
Let $\Gamma\subseteq \Delta^\text{re}$ be nonempty.

First we note that $\Phi(\Gamma)$ is closed under negation, since $w_\a\a=-\a$. So, using the formulas of Lemma~\ref{L-neg},
$$\Phi(\Gamma)=\{\alpha^{LL}_j,\alpha^{LU}_{-j-1},\alpha^{SU}_k,\alpha^{SL}_{-k-1} \mid j\in I^L,\, k\in I^S\},$$
for some index sets $I^L,I^S\subseteq \mathbb{Z}$.
Every  real root has the form $\a=w\a_i$ for $i=1,2$ and $w\in W$, so the reflection in $\alpha$ is $w_\a=ww_iw^{-1}$.
We obtain the following formulas for the reflections corresponding to each real root:
$$w^{LL}_j = w^{LU}_{-j-1}= (w_1w_2)^{2j}w_1, \qquad  w^{SU}_j=w^{SL}_{-j-1}=(w_2w_1)^{2j}w_2.$$
We can use this to easily prove formulas for the action of a reflection on a real root:
\begin{lemma}\label{reflact} For all $j,k\in\mathbb{Z}$,
\begin{align*}
w^{LL}_k\a^{LL}_j&=-\a^{LL}_{2k-j},& w^{SU}_k\a^{SU}_j&= -\a^{SU}_{2k-j},\\
w^{LL}_k\a^{SU}_j&=-\a^{SU}_{-2k-j-1},& w^{SU}_k\a^{LL}_j&= -\a^{LL}_{-2k-j-1}.
\end{align*}
\end{lemma}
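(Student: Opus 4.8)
The plan is to compute directly using the expressions $w^{LL}_j = w^{LU}_{-j-1} = (w_1w_2)^{2j}w_1$ and $w^{SU}_j = w^{SL}_{-j-1} = (w_2w_1)^{2j}w_2$ together with the definitions $\a^{LL}_j = (w_1w_2)^j\a_1$ and $\a^{SU}_j = (w_2w_1)^j\a_2$. For the first identity, $w^{LL}_k\a^{LL}_j = (w_1w_2)^{2k}w_1(w_1w_2)^j\a_1$. The key algebraic fact I would use is that $w_1$ conjugates $w_1w_2$ to its inverse: $w_1(w_1w_2)w_1 = w_2w_1 = (w_1w_2)^{-1}$, hence $w_1(w_1w_2)^j = (w_1w_2)^{-j}w_1$. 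Substituting, $w^{LL}_k\a^{LL}_j = (w_1w_2)^{2k}(w_1w_2)^{-j}w_1\a_1 = (w_1w_2)^{2k-j}(-\a_1) = -(w_1w_2)^{2k-j}\a_1 = -\a^{LL}_{2k-j}$, using $w_1\a_1 = -\a_1$ and linearity. The second identity $w^{SU}_k\a^{SU}_j = -\a^{SU}_{2k-j}$ is identical after swapping the roles of the indices $1$ and $2$, using $w_2(w_2w_1)w_2 = (w_2w_1)^{-1}$ and $w_2\a_2=-\a_2$.

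For the mixed identities, consider $w^{LL}_k\a^{SU}_j = (w_1w_2)^{2k}w_1(w_2w_1)^j\a_2$. Here I need to commute $w_1$ past a power of $w_2w_1$. Since $w_1(w_2w_1)w_1 = w_1w_2 = (w_2w_1)^{-1}$ as well, we get $w_1(w_2w_1)^j = (w_2w_1)^{-j}w_1$, so the expression becomes $(w_1w_2)^{2k}(w_2w_1)^{-j}w_1\a_2$. Now $w_1\a_2 = \a_2 - a_{21}\a_1 = \a_2 + a\a_1$; more usefully, observe $w_1\a_2 = (w_1w_2)w_2\a_2\cdot(-1)$... — rather, the clean route is to recognize $w_1\a_2 = w_1w_2(-\a_2) = -(w_1w_2)\a_2$, so $(w_1w_2)^{2k}(w_2w_1)^{-j}w_1\a_2 = -(w_1w_2)^{2k}(w_2w_1)^{-j}(w_1w_2)\a_2$. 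Writing $(w_1w_2) = (w_2w_1)^{-1}$ is false in the group, so instead I convert everything to a single generator string: $(w_1w_2)^{2k}(w_2w_1)^{-j}(w_1w_2)\a_2 = (w_1w_2)^{2k}(w_1w_2)^{j}(w_1w_2)\a_2 = (w_1w_2)^{2k+j+1}\a_2$. Wait — this needs care since $(w_2w_1)^{-j} = (w_1w_2)^{j}$ only as a group element, which is in fact correct because $(w_2w_1)^{-1} = w_1^{-1}w_2^{-1} = w_1w_2$. So indeed $w^{LL}_k\a^{SU}_j = -(w_1w_2)^{2k+j+1}\a_2$; but I want it in the form $\a^{SU}_m = (w_2w_1)^m\a_2$, so I rewrite $(w_1w_2)^{2k+j+1}\a_2 = (w_2w_1)^{-(2k+j+1)}\a_2 = \a^{SU}_{-2k-j-1}$, giving $w^{LL}_k\a^{SU}_j = -\a^{SU}_{-2k-j-1}$ as claimed. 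The last identity follows by the $1\leftrightarrow 2$ symmetry.

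The only genuine subtlety — and the step I would be most careful with — is bookkeeping the conversions between $(w_1w_2)$-powers and $(w_2w_1)$-powers, since $(w_1w_2)^{-1} = w_2^{-1}w_1^{-1} = w_2w_1$ (the simple reflections are involutions), so $(w_2w_1)^m = (w_1w_2)^{-m}$ as group elements, and one must track the resulting sign flips in the index consistently. Everything else is a short manipulation in the dihedral group $W \cong D_\infty$ combined with $w_i\a_i = -\a_i$; no case analysis on $a,b$ is needed. I would present the computation for $w^{LL}_k\a^{LL}_j$ and $w^{LL}_k\a^{SU}_j$ in full, and remark that the remaining two follow by interchanging the indices $1$ and $2$ throughout.
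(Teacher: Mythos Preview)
Your argument is correct and is precisely the direct computation in $W\cong D_\infty$ that the paper has in mind when it says the formulas for $w^{LL}_j$ and $w^{SU}_j$ ``easily prove'' the lemma; the key identities $w_i(w_1w_2)w_i=(w_1w_2)^{-1}$, $(w_2w_1)^{-1}=w_1w_2$, and $w_i\alpha_i=-\alpha_i$ are exactly what is needed. The only improvement to make before a final write-up is to excise the exploratory detours (the ``Wait'' passage) and present the third computation as cleanly as the first.
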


\begin{lemma} \label{reflincl} Given integers $j$ and $k$:
\begin{enumerate} 
\item\label{reflincl-l} If $j,k\in I^L$, then $j+(k-i)\mathbb{Z}\subseteq I^L$.
\item\label{reflincl-s} If $j,k\in I^S$, then $j+(k-j)\mathbb{Z}\subseteq I^S$.
\item\label{reflincl-ls}If $j\in I^L$, $k\in I^S$, then $j+(2j+2k+1)\mathbb{Z}\subseteq I^L$ and $k+(2j+2k+1)\mathbb{Z}\subseteq I^S$.
\end{enumerate}
\end{lemma}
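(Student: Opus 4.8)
## Proof Plan for Lemma \ref{reflincl}

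The plan is to derive each containment from the formulas for reflection actions in Lemma~\ref{reflact}, using the defining property that $\Phi(\Gamma)$ is closed under the action of $W_\Gamma$, together with the closure under negation recorded just before Lemma~\ref{reflact}. Recall that by that closure property, membership of a root in $\Phi(\Gamma)$ is encoded by the index sets: $j\in I^L$ means $\alpha^{LL}_j\in\Phi(\Gamma)$ (equivalently $\alpha^{LU}_{-j-1}\in\Phi(\Gamma)$), and $k\in I^S$ means $\alpha^{SU}_k\in\Phi(\Gamma)$. Since $\Phi(\Gamma)$ is stable under every reflection $w_\alpha$ with $\alpha\in\Phi(\Gamma)$, and in particular under $w^{LL}_k$ when $k\in I^L$ and under $w^{SU}_k$ when $k\in I^S$, each line of Lemma~\ref{reflact} turns a statement ``$j$ is in the relevant index set'' into another such statement.

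First I would prove (i). Suppose $j,k\in I^L$. Then $\alpha^{LL}_j\in\Phi(\Gamma)$ and the reflection $w^{LL}_k$ preserves $\Phi(\Gamma)$; by Lemma~\ref{reflact}, $w^{LL}_k\alpha^{LL}_j=-\alpha^{LL}_{2k-j}$, and since $\Phi(\Gamma)$ is closed under negation this gives $2k-j\in I^L$. Iterating this with the same $k$ (and noting $w^{LL}_k$ applied to $\alpha^{LL}_{2k-j}$ returns $\alpha^{LL}_j$) shows that starting from any element $m\in I^L$ we get $2k-m\in I^L$; combining reflections about indices $j$ and $k$, the composite $w^{LL}_j w^{LL}_k$ sends $\alpha^{LL}_m$ to $\alpha^{LL}_{m+2(j-k)}$ up to sign, so $I^L$ is stable under translation by $2(k-j)$. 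To get the full step $k-j$ rather than $2(k-j)$, I would reflect $\alpha^{LL}_j$ about $\alpha^{LL}_k$ to land on index $2k-j=j+2(k-j)$, then reflect $\alpha^{LL}_k$ about that to land on $j+2(k-j)$ shifted again; more cleanly, note $2k-j \equiv j \pmod{2(k-j)}$ but $2k-j = j + 2(k-j)$, and pairing with the reflection about $\alpha^{LL}_{2k-j}$, whose index differs from $k$ by exactly $k-j$, yields translation by $2(k-j)$ again — so one must check parity carefully; the correct conclusion, matching the statement, is that the set of achievable indices starting from $j$ is $j+(k-j)\mathbb{Z}$, obtained because reflecting about two centres at distance $d=k-j$ generates all translations by $2d$, but reflecting the image about a centre that has itself moved generates the odd multiples too. (Here I am treating the index ``$i$'' in the displayed statement of part~(i) as a typo for ``$j$''.) Part (ii) is identical with $L\to S$ and $w^{LL}\to w^{SU}$, using the second column of Lemma~\ref{reflact}.

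For (iii), suppose $j\in I^L$ and $k\in I^S$. Then both $\alpha^{LL}_j$ and $\alpha^{SU}_k$ lie in $\Phi(\Gamma)$, so $\Phi(\Gamma)$ is stable under both $w^{LL}_j$ and $w^{SU}_k$. By the mixed formulas in Lemma~\ref{reflact}, $w^{LL}_j\alpha^{SU}_k=-\alpha^{SU}_{-2j-k-1}$, so $-2j-k-1\in I^S$; then applying $w^{SU}_k$ to $\alpha^{LL}$-roots, $w^{SU}_k\alpha^{LL}_m=-\alpha^{LL}_{-2k-m-1}$ gives $-2k-m-1\in I^L$ whenever $m\in I^L$. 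Composing: starting from $j\in I^L$, apply $w^{SU}_k$ to get $-2k-j-1\in I^L$, then apply $w^{LL}_j$ (legitimate since $j\in I^L$) to this new long root — by part (i)'s translation structure with centres $j$ and $-2k-j-1$ at distance $2j+2k+1$ — we obtain $j+(2j+2k+1)\mathbb{Z}\subseteq I^L$. The $S$ statement follows symmetrically by starting from $k\in I^S$, applying $w^{LL}_j$ to get $-2j-k-1\in I^S$, and then using part (ii) with centres $k$ and $-2j-k-1$, again at distance $2j+2k+1$.

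The main obstacle I anticipate is bookkeeping the parity/sign issues in the index arithmetic — specifically, verifying that a composition of two reflections with centres at index-distance $d$ really generates translation by $d$ (not merely $2d$) on the index set, which is what makes the conclusions $j+(k-j)\mathbb{Z}$ and $j+(2j+2k+1)\mathbb{Z}$ correct. This requires being careful that after one reflection the root's index itself moves, so the ``second centre'' available to us is the moved index, and the distance between the original and moved centres is exactly $d$; chaining these gives the odd multiples. Everything else is a direct substitution into Lemma~\ref{reflact} plus invocation of $W_\Gamma$-stability and closure under negation.
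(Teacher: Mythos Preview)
Your approach is correct and matches the paper's: both rest on Lemma~\ref{reflact} together with closure of $\Phi(\Gamma)$ under $W_\Gamma$ and under negation. The parity worry you flag for parts (i)--(ii) is a non-issue: since both $j=j+0\cdot(k-j)$ and $k=j+1\cdot(k-j)$ already lie in $I^L$, stability under translation by $2(k-j)$ immediately gives all of $j+(k-j)\mathbb{Z}$. The paper avoids this discussion entirely via a direct bidirectional induction: if $\ell=j+(n-1)(k-j)$ and $m=j+n(k-j)$ are both in the index set, reflecting one about the other gives $2m-\ell=j+(n+1)(k-j)$ and $2\ell-m=j+(n-2)(k-j)$, with base case $n=0,1$ being the hypothesis. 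For part (iii), your reduction to part (i)---use the mixed reflection $w^{SU}_k\alpha^{LL}_j=-\alpha^{LL}_{-2k-j-1}$ to place $-2k-j-1\in I^L$, then invoke (i) with the pair $j,\,-2k-j-1$ at distance $2j+2k+1$---is a clean shortcut; the paper instead reruns the bidirectional induction directly.
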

\begin{proof}
Suppose $I^S$ contains $\ell:=j+(n-1)(k-j)$ and $m:=j+n(k-j)$. Then
Lemma~\ref{reflact} shows that $j+(n+1)(k-j)=2\ell-m\in I^S$ and
$j+(n-2)(k-j)=2m-\ell \in I^S$. Part (i) now follows by bidirectional induction. Part (ii) is similar.

Let $d:=2j+2k+1$. Now suppose $j,j+nd\in I^L$ and $k, k+nd\in I^S$.
Then $j-(n+1)d= -2k-(j+nd)-1\in I^L$ and so $j+(n+1)d =2j-(j-(n+1)d)\in I^S$.
Similar arguments show that $j+(n-1)d\in I^S$ and $k+(n\pm1)d\in I^L$.
Part (iii) now follows by bidirectional induction. 
\end{proof}

We can now classify the $\Phi$-subsystems in terms of their index sets:
\begin{proposition} $\;$ \label{Iclass}
\begin{enumerate} 
\item If $I^S$ is empty, then $I^L=r+d\mathbb{Z}$ for some $r,d\in\mathbb{Z}$ with $d\ge0$ and $0\le r<d$.
\item If $I^L$ is empty, then $I^S=r+d\mathbb{Z}$ for some $r,d\in\mathbb{Z}$ with $d\ge0$ and $0\le r<d$.
\item Otherwise, $I^L=r+(2d+1)\mathbb{Z}$ and $I^S=d-r+(2d+1)\mathbb{Z}$ for some $d\ge0$ and $d\le r\le d$.
\end{enumerate}
\end{proposition}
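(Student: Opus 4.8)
The plan is to analyze the three cases according to which of the index sets $I^L,I^S$ is empty, using Lemma~\ref{reflincl} as the engine and a minimality argument to pin down the exact arithmetic-progression structure. First I would dispose of cases (i) and (ii), which are symmetric under interchange of the two root lengths, so it suffices to treat (i). Assume $I^S=\varnothing$ and $I^L\neq\varnothing$. Pick $r\in I^L$ with $|r|$ minimal (if $I^L$ is a single element, take $d=0$ and we are done; if $I^L=\{r\}$ but $r\neq 0$, a shift argument shows we can normalize, but more carefully: the index set is only defined up to the choice of which element of $\Phi(\Gamma)$ we call ``$\alpha_1$-like'', so one should first observe that $I^L$ is \emph{not} canonically a subset of $\mathbb{Z}$—rather, reflections act on it by $j\mapsto 2k-j$, and the orbit structure is what matters). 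The clean way: by Lemma~\ref{reflincl}\eqref{reflincl-l}, for any $j,k\in I^L$ the whole coset $j+(k-j)\mathbb{Z}$ lies in $I^L$; let $d\ge 0$ be the generator of the subgroup $\{k-j : j,k\in I^L\}$ of $\mathbb{Z}$ (so $d=0$ iff $|I^L|=1$). Then $I^L$ is contained in a single coset $r+d\mathbb{Z}$, and applying \eqref{reflincl-l} to a pair realizing the gcd shows $I^L\supseteq r+d\mathbb{Z}$, hence equality; normalize $0\le r<d$ when $d>0$ by relabeling. This gives (i), and (ii) follows identically.

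For case (iii), suppose both $I^L$ and $I^S$ are nonempty. The key is Lemma~\ref{reflincl}\eqref{reflincl-ls}: for every $j\in I^L$ and $k\in I^S$, writing $d_{j,k}:=2j+2k+1$ (always odd, hence nonzero), we get $j+d_{j,k}\mathbb{Z}\subseteq I^L$ and $k+d_{j,k}\mathbb{Z}\subseteq I^S$. I would first show all these moduli coincide: pick $j_0\in I^L$, $k_0\in I^S$ and set $d:=|2j_0+2k_0+1|$ for a choice minimizing this odd positive integer. Then for any other $j\in I^L$, parts \eqref{reflincl-l} and \eqref{reflincl-ls} together force $2j+2k_0+1$ to be a multiple of $d$ (since $j\equiv j_0$ modulo the step coming from \eqref{reflincl-l}, and that step must itself be compatible with $d$); by minimality $2j+2k_0+1=\pm d$, and the sign can be normalized, so in fact $I^L\subseteq r+ d'\mathbb{Z}$ with the right modulus. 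Setting $d$ so that the common modulus is $2d+1$ and writing $r$ for the residue of $I^L$, the relation $2j+2k+1\equiv 0\pmod{2d+1}$ forces $k\equiv -r-\tfrac12\equiv d-r\pmod{2d+1}$ (using $2d+1\mid 2(d-r)+2r+1$), which gives $I^S\subseteq d-r+(2d+1)\mathbb{Z}$; the reverse inclusions come from \eqref{reflincl-ls} applied to any valid pair. Finally one normalizes $-d\le r\le d$ (the statement's ``$d\le r\le d$'' should read $-d\le r\le d$, a range of $2d+1$ residues) by choosing the representative, and checks $d\ge 0$ automatically.

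\textbf{Main obstacle.} The genuinely delicate point is the bookkeeping in case (iii): one must show that the various moduli $2j+2k+1$ arising from different pairs $(j,k)\in I^L\times I^S$ are all equal (not merely commensurable), and simultaneously that the ``intra-length'' steps from \eqref{reflincl-l}–\eqref{reflincl-s} are forced to equal this common odd modulus rather than some proper divisor. The resolution is a two-way minimality argument: take the pair $(j_0,k_0)$ minimizing the positive odd integer $d:=|2j_0+2k_0+1|$, use \eqref{reflincl-ls} to generate an arithmetic progression of that step in \emph{both} index sets, and then show any element outside these progressions would, via a further application of \eqref{reflincl-l} or \eqref{reflincl-ls}, produce a strictly smaller positive odd modulus—contradicting minimality. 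I also want to double-check the edge case $d=0$ in parts (i)–(ii) (a one-element index set, giving a finite $A_1$ subsystem) and the case $r=d$ versus $r=-d$ in (iii), which correspond to the same coset and so should be identified in the normalization. Once the minimality argument is set up cleanly, the rest is routine modular arithmetic.
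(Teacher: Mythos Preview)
Your treatment of parts (i) and (ii) is essentially the paper's: both use that a nonempty subset of $\mathbb{Z}$ closed under $(j,k)\mapsto 2k-j$ is an arithmetic progression, pinned down by the gcd of differences together with Lemma~\ref{reflincl}\ref{reflincl-l}. (One small point: you assert that $\{k-j:j,k\in I^L\}$ is already a subgroup; this is true, but the cleanest justification is the minimal-positive-difference argument you allude to rather than a direct closure check.)

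For part (iii) your route diverges from the paper's. The paper first observes that $\Phi(\Gamma)\cap W\alpha_1$ and $\Phi(\Gamma)\cap W\alpha_2$ are themselves $\Phi$-subsystems, so (i) and (ii) immediately give $I^L=r_1+d_1\mathbb{Z}$ and $I^S=r_2+d_2\mathbb{Z}$; it then applies Lemma~\ref{reflincl}\ref{reflincl-ls} with $j=r_1$ fixed and $k=r_2+md_2$ varying to obtain the divisibilities $d_1\mid 2r_1+2r_2+1$ (forcing $d_1$ odd) and $d_1\mid 2d_2$ (hence $d_1\mid d_2$), and symmetrically $d_2\mid d_1$. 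This is short and purely arithmetic once the coset structure is in hand. Your proposed minimality argument---minimize $m:=|2j_0+2k_0+1|$ over $I^L\times I^S$, generate $j_0+m\mathbb{Z}\subseteq I^L$ and $k_0+m\mathbb{Z}\subseteq I^S$ via \ref{reflincl-ls}, then show any further $j\in I^L$ would yield a smaller modulus---can indeed be made to work and is a genuine alternative that avoids invoking (i) and (ii) inside (iii). Its advantage is directness; the paper's advantage is that the divisibility step is a one-liner once the coset structure is known.

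One concrete imprecision to fix: your claim ``by minimality $2j+2k_0+1=\pm d$'' is too strong. What minimality over the pairs $(j,k_0+nm)$ actually gives is that $2j+2k_0+1+2nm$ has absolute value $\ge m$ for every $n\in\mathbb{Z}$; since this arithmetic progression has step $2m$, it avoids $(-m,m)$ only when $2j+2k_0+1\equiv m\pmod{2m}$, i.e.\ $2j+2k_0+1$ is an \emph{odd multiple} of $m$. That (not $=\pm m$) is what yields $j\equiv j_0\pmod m$. Your ``main obstacle'' paragraph has the right shape, but the body of the plan should be corrected to reflect this. You also correctly flag the typo $d\le r\le d$ for $-d\le r\le d$.
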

\begin{proof}
(i) Suppose $I^S$ empty and let $J=\{j\in\mathbb{Z}\mid \a^{LL}_j\in\Gamma\text{ or }\a^{LU}_{-j-1}\in\Gamma\}$, so
$\Phi(\Gamma)=\Phi(\{\a^{LL}_j\mid j \in J\})$.
If $J$ contains a single element, then take $r$ to be that element and $d=0$.
Otherwise, let $d$ be the greatest common divisor of all the integers $j-k$ for $j,k\in J$ with $j\ne k$. 
Let $r$ be the remainder of $j\in J$ divided by $d$, which is the same for all $j\in J$.
Then standard properties of integer lattices together with Lemma~\ref{reflincl}\ref{reflincl-l} show that 
$$J \subseteq r+d\mathbb{Z}\subseteq I^L.$$
It now suffices to show that $\{\alpha^{LL}_j,\alpha^{LU}_{-j-1} \mid j\in r+d\mathbb{Z}\}$ is a $\Phi$-subsystem, 
but this follows immediately from Lemmas~\ref{L-neg} and~\ref{reflact}.
The proof of (ii) is similar to (i).

(iii) 
The orbits of $W$ on  $\Delta^\re$ are $W\a_1$ and $W\a_2$, so 
$\Phi(\Gamma)\cap W\a_1=\{\alpha^{LL}_j,\alpha^{LU}_{-j-1} \mid i\in I^L\}$ and
$\Phi(\Gamma)\cap W\a_2=\{\alpha^{SU}_j,\alpha^{SL}_{-j-1}\mid j\in I^S\}$ are both $\Phi$-subsystems in their own rights.
By (i) and (ii), $I^L=r_1+d_1\mathbb{Z}$ and $I^S=r_2+d_2\mathbb{Z}$ for some  $d_i\ge0$, $0\le r_i<d_i$, for $i=1,2$.
For every $m\in\mathbb{Z}$, we have $r_1\in I^L$ and $r_2+md_2\in I^S$, so
Lemma~\ref{reflincl}\ref{reflincl-ls} implies that 
$r_1+(2r_1+2r_2+2md_2+1)m\mathbb{Z}\subseteq r_1+d_1\mathbb{Z}$.
Hence
$$d_1\mid (2r_1+2r_2+1)+2md_2,\qquad\text{for all $m\in\mathbb{Z}$}.$$
So $d_1\mid 2r_1+2r_2+1$ and hence $d_1$ is odd, say $d_1=2d+1$.
Also $d_1\mid 2d_2$ and hence $d_1\mid d_2$.
Reversing the roles of $I^L$ and $I^S$ we also get $d_2\mid d_1$,
so $d_1=d_2=2d+1$. We can choose $r$ such that $r\equiv r_1\pmod{2d+1}$ and $-d\le r\le d$, so that $I^L=r+(2d+1)\mathbb{Z}$.
Finally $2r+2r_2+1\equiv0\pmod{2d+1}$,
so 
$$r_2\equiv r_2 +2dr+2dr_2+d \equiv r_2-r-r_2+d\equiv d-r\pmod{2d+1},$$
and hence $I^S=d-r+(2d+1)\mathbb{Z}$.
\end{proof}

\begin{thm}\label{subsys}
Let $\Delta$ be an infinite rank 2 root system of type $H(a,b)$ with $a\ge b$ and $ab\ge4$.
Every nonempty $\Phi$-subsystem of $\Delta$ has simple roots, Cartan matrix, and inner product matrix given by one of the rows in Table~\ref{T-Phisubs}
where $\delta_d:= \eta_{d}-\eta_{d-1}$ and  $\epsilon_d:= \gamma_{d+1}-\gamma_d$.
In particular  all $\Phi$-subsystems of $\Delta$ have rank at most $2$.
\end{thm}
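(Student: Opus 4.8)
The plan is to leverage the classification of index sets in Proposition~\ref{Iclass}, which already reduces every nonempty $\Phi$-subsystem to one of three shapes, and then in each case pick an explicit generating pair of simple roots, compute the Cartan matrix entries and inner products from Lemma~\ref{acp}, and check that the resulting root system is again of type $H(\cdot,\cdot)$ (or finite $A_1$). First I would treat the degenerate cases: if both $I^L$ and $I^S$ are empty there is nothing to prove (the system is empty), and if exactly one index set is a singleton (the $d=0$ case of parts (i) or (ii) of Proposition~\ref{Iclass}) then $\Phi(\Gamma)=\{\pm\alpha\}$ for a single real root $\alpha$, which is an $A_1$ system with the obvious $1\times1$ data — this should be the first row of Table~\ref{T-Phisubs}.

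Next I would handle the "all long'' and "all short'' cases, i.e.\ Proposition~\ref{Iclass}(i) with $d\ge1$ and (ii) with $d\ge1$. Here $I^L=r+d\mathbb{Z}$ (resp.\ $I^S$), and by applying a suitable element of $W$ I may normalize $r=0$, so the subsystem is $\{\alpha^{LL}_{dj},\alpha^{LU}_{-dj-1}\mid j\in\mathbb{Z}\}$. The natural simple roots of the subsystem are $\alpha^{LL}_0=\alpha_1$ and $\alpha^{LL}_{-d}=-\alpha^{LU}_{d-1}$ (using Lemma~\ref{L-neg}); I would compute their inner products using $||\alpha_1||^2=2a/b$, $(\alpha_1,\alpha_2)=-a$, $||\alpha_2||^2=2$, together with the coordinate formulas $\alpha^{LL}_{-d}=\eta_{-d}\alpha_1+a\gamma_{-d}\alpha_2=-\eta_{d-1}\alpha_1-a\gamma_d\alpha_2$. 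The off-diagonal Cartan entries are then $-2(\alpha_1,\alpha^{LL}_{-d})/||\alpha_1||^2$ and its mirror; I expect these to come out as $\delta_d$ and $\delta_d$ (in the long case) or $\epsilon_d$ and $\epsilon_d$ (in the short case), and one checks via Lemma~\ref{acp}(ii)--(iv) that they are positive integers with product $\ge 4$, hence affine or hyperbolic. The symmetric shape $H(\delta_d,\delta_d)$ is forced because both generating roots lie in the same $W$-orbit and so have equal length.

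Finally, the mixed case Proposition~\ref{Iclass}(iii): $I^L=r+(2d+1)\mathbb{Z}$, $I^S=d-r+(2d+1)\mathbb{Z}$. After normalizing by a Weyl element I would take as simple roots of the subsystem one long root $\alpha^{LL}_r$ and one short root $\alpha^{SU}_{d-r}$ (or, in the normalized version, $\alpha^{LL}_0=\alpha_1$ and a short root whose index records $d$), compute the $2\times2$ Cartan and Gram matrices from Lemma~\ref{acp}, and read off $H(a',b')$ with $a',b'$ expressed through $\delta$'s and $\epsilon$'s evaluated at the relevant index. I would then verify $a'b'\ge4$ using the growth estimates already available (Lemma~\ref{staircase} in the $b>1$ case, and the exponential bounds of Lemma~\ref{bound} and its corollary in the $b=1$ case), so the subsystem is affine or hyperbolic, never finite of rank $2$. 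The claim that all $\Phi$-subsystems have rank at most $2$ is then immediate: Proposition~\ref{Iclass} shows $I^L$ and $I^S$ are each a single arithmetic progression, so at most two $W_\Gamma$-orbit generators are needed, i.e.\ $W_\Gamma$ is generated by at most two reflections.

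The main obstacle I anticipate is purely computational bookkeeping: correctly matching the index $d$ appearing in Proposition~\ref{Iclass} with the subscripts of $\eta$ and $\gamma$ in the root-coordinate formulas, keeping track of the shifts coming from Lemma~\ref{L-neg} (the asymmetry $\eta_{-j}=-\eta_{j-1}$ versus $\gamma_{-j}=-\gamma_j$), and verifying that the off-diagonal Cartan entries simplify exactly to $\delta_d=\eta_d-\eta_{d-1}$ and $\epsilon_d=\gamma_{d+1}-\gamma_d$ as claimed. This is where the recurrences in Lemma~\ref{acp} and the inequalities in Lemma~\ref{staircase} (or the bounds for $b=1$) do the real work of ensuring integrality and the inequality $a'b'\ge4$; everything else is a finite case check against Table~\ref{T-Phisubs}.
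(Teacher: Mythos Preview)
Your overall strategy---invoke Proposition~\ref{Iclass}, then in each case exhibit a base and compute the Cartan entries via the bilinear form and Lemma~\ref{acp}---is exactly the paper's approach. Two concrete points in your outline are wrong, however, and would derail the computation if carried out as written.

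First, in the all-long case you propose $\{\alpha^{LL}_0,\alpha^{LL}_{-d}\}$ as simple roots. This pair is \emph{not} a base: one checks that $\alpha^{LL}_d=(\eta_d+\eta_{d-1})\alpha^{LL}_0-\alpha^{LL}_{-d}$, a mixed-sign combination, and indeed the inner product $(\alpha^{LL}_0,\alpha^{LL}_{-d})$ is positive, so the off-diagonal Cartan entry comes out with the wrong sign. The paper instead takes $\{\alpha^{LL}_r,\alpha^{LU}_{d-r-1}\}$ (both positive in $\Delta$) and uses $W$-invariance of the form, $(\alpha^{LL}_r,\alpha^{LU}_{d-r-1})=(\alpha_1,\alpha^{LU}_{d-1})$, rather than normalising $r$ away.

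Second, and more substantively, your expectation that the all-short case $\text{II}_S$ yields Cartan entries $\epsilon_d$ is incorrect: it gives $\delta_d$, just as in the long case. With the base $\{\alpha^{SU}_0,\alpha^{SL}_{d-1}\}=\{\alpha_2,\,b\gamma_d\alpha_1+\eta_{d-1}\alpha_2\}$ one finds
\[
c_{12}=\tfrac{2}{\|\alpha_2\|^2}(\alpha_2,\alpha^{SL}_{d-1})=2\eta_{d-1}-ab\gamma_d=\eta_{d-1}-\eta_d=-\delta_d,
\]
using Lemma~\ref{acp}(ii). The parameter $\epsilon_d$ enters only in the mixed case $\text{II}_{LS}$, where the Cartan matrix is $H(a\epsilon_d,b\epsilon_d)$. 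Finally, the verification that $a'b'\ge4$ is not part of Theorem~\ref{subsys}; that is the content of the subsequent theorem and need not be argued here.
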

\begin{table}
\begin{tabular}{l||l|l|l|l}
Type & Integer conditions & Simple roots & Cartan Matrix & Inner product matrix\\\hline
$\text{I}_L$&$r$ arbitrary & $\a^{LL}_r$ & $A_1$ & $\frac{a}{b}A_1$\\
$\text{I}_S$&$r$ arbitrary & $\a^{SU}_r$ & $A_1$ & $A_1$\\
$\text{II}_L$ & $d>0$, $0\le r<d$ & $\a^{LL}_r,\a^{LU}_{d-r-1}$& $H(\delta_d,\delta_d)$& $\frac{a}{b} H(\delta_d,\delta_d)$\\
$\text{II}_S$&$d>0$, $0\le r<d$ & $\a^{SU}_r,\a^{SL}_{d-r-1}$& $H(\delta_d,\delta_d)$& $H(\delta_d,\delta_d)$\\
$\text{II}_{LS}$&$d\ge 0$, $-d\le r\le d$ & $\a^{LL}_r,\a^{SU}_{d-r}$& $H(a\epsilon_d, b\epsilon_d)$ & $B(a\epsilon_d, b\epsilon_d)$
\end{tabular}
\vspace{2mm}
\caption{$\Phi$-subsystems of rank 2 root systems}\label{T-Phisubs}
\end{table}
\begin{proof}
Let $\Phi'$ be a $\Phi$-subsystem of $\Delta$.
First suppose that $\Phi'\subseteq W\a_1$. 
Then Proposition~\ref{Iclass}(i) implies that 
$\Phi'=\{\alpha^{LL}_j,\alpha^{LU}_{-j-1} \mid j\in r+d\mathbb{Z}\},$
for some $d\ge0$ and $0\le r<d$.
If $d=0$, this gives us type $\text{I}_L$.
Otherwise it is easily shown that every positive root in $\Phi'$ is a
positive linear combination of $\a^{LL}_r$ and $\a^{LU}_{d-r-1}$, so this forms a base. The Cartan matrix and inner product matrix can be computed directly from the base. 
For example, if the Cartan matrix is $(c_{ij})$ then
\begin{align*}
c_{12} &= \frac{2(\a^{LL}_r,\a^{LU}_{d-r-1})}{(\alpha_r^{LL},\alpha^{LL}_r)}
  = \frac{b}{a} (\a^{LL}_r,\a^{LU}_{d-r-1})
  =\frac{b}{a} \left((w_1w_2)^r\a_1,(w_1w_2)^r\a^{LU}_{d-1}\right)
  =\frac{b}{a} (\a_1,\a^{LU}_{d-1})\\
  &= \frac{b}{a}\left(\begin{matrix}1&0\end{matrix}\right) \left(\begin{matrix}2a/b&-a\\-a&2\end{matrix}\right) 
  \left(\begin{matrix}\eta_{d-1}\\a\gamma _{d}\end{matrix}\right)
   = \frac{b}{a}\left(2\frac{a}{b}\eta_{d-1}-a^2\gamma_d\right)
 = 2\eta_{d-1}-ab\gamma_d
  = \eta_{d-1}-\eta_d =-\delta_d,
\end{align*} 
where the second last equality follows from Lemma~\ref{acp}(ii).
This gives type $\text{II}_L$.

Similarly we get types $\text{I}_S$ and $\text{II}_S$ from  Proposition~\ref{Iclass}(ii), and type $\text{II}_{LS}$ from Proposition~\ref{Iclass}(iii).
\end{proof}

%
%

Values of $\delta_d$ and $\epsilon_d$ for small $d$ are given in Table~\ref{T-subparams}.
\begin{table}
{\tiny
$$\begin{array}{r|rr}
d & \delta_d=\eta_d-\eta_{d-1} & \epsilon_d=\gamma_{d+1}-\gamma_{d} \\\hline
0 & & 1\\
1 & ab - 2 & ab - 3 \\
2 & a^2b^2 - 4ab + 2 & a^2b^2 - 5ab + 5 \\
3 & a^3b^3 - 6a^2b^2 + 9ab - 2 & a^3b^3 - 7a^2b^2 + 14ab - 7 \\
4 & a^4b^4 - 8a^3b^3 + 20a^2b^2 - 16ab + 2 & a^4b^4 - 9a^3b^3 +
27a^2b^2 - 30ab + 9 \\
5 & a^5b^5 - 10a^4b^4 + 35a^3b^3 - 50a^2b^2 + 25ab - 2 & a^5b^5 -
11a^4b^4 + 44a^3b^3 - 77a^2b^2 + 55ab - 11 \\
6 & a^6b^6 - 12a^5b^5 + 54a^4b^4 - 112a^3b^3 + 105a^2b^2 - 36ab + 2 &
a^6b^6 - 13a^5b^5 + 65a^4b^4 - 156a^3b^3 + 182a^2b^2 - 91ab + 13
\end{array}$$}
\caption{Values of $\delta_d$ and $\epsilon_d$ for small $d$}\label{T-subparams}
\end{table}

We can also classify of $\Phi$-subsystems as finite, affine or hyperbolic
systems:
\begin{thm} 
Let $\Delta$ be a rank 2 root system and let $\Gamma$ be a nonempty set of real roots in $\Delta$.
\begin{enumerate}
\item If $\Delta$ is finite, then $\Phi(\Gamma)$ is finite.
\item If $\Delta$ is affine of type $\widetilde{A}_1$, then $\Phi(\Gamma)$ has finite type $A_1$ or affine type $\widetilde{A}_1$.
\item If $\Delta$ is affine of type $\widetilde{A}_2^{(2)}$, then $\Phi(\Gamma)$ has finite type $A_1$, or affine type $\widetilde{A}_1$ or $\widetilde{A}_2^{(2)}$.
\item If $\Delta$ is hyperbolic, then $\Phi(\Gamma)$ has finite type $A_1$ or hyperbolic type.
\end{enumerate}
\end{thm}
\begin{proof}
Part (i) is clear. The finite type $A_1$ occurs exactly when $\Gamma\subseteq\{\pm\alpha\}$, so we will assume from now on that this is not the case.

If $\Delta$ is affine, then $ab=4$ and it is easy to show from the recursion formulas in Lemma~\ref{acp} that
$\delta_d=\eta_d-\eta_{d-1}=2$ and $\epsilon_d=\gamma_{d+1}-\gamma_d=1$.
Parts (ii) and (iii) now follow.

If $\Delta$ hyperbolic, then $ab>4$, and so for $d>1$
$$\delta_d=\eta_d-\eta_{d-1} = (ab-2)\eta_{d-1} -\eta_{d-2}-\eta_{d-1} >2\eta_{d-1} -\eta_{d-2}-\eta_{d-1} =\eta_{d-1}-\eta_{d-2}=\delta_{d-1}.$$
By induction we get $\delta_d \ge \delta_1=(ab-1)-1=ab-2>2$ for all $d>0$.
It now follows that $H(\delta_d,,\delta_d)$ is hyperbolic since ${\delta_d}^2>4$.

A similar argument shows that
$\epsilon_d>\epsilon_{d-1}$ for $d>0$,
and so $\epsilon_d\ge\epsilon_0=1$ for $d\ge0$.
and so $H(a\epsilon_d,b\epsilon_d)$ is hyperbolic.
\end{proof}
As part of  the last proof we showed that the sequences $\delta_d$ and $\epsilon_d$ are strictly increasing when $\Delta$ is hyperbolic, so Theorem~1.2 is now proved for $\Phi$-subsystems.

%
%

We now consider the classification of $\Delta$-subsystems of $\Delta$.
Let $\Gamma\subseteq \Delta^\re$ nonempty and recall that $\Delta(\Gamma)=\mathbb{Z}\Gamma\cap\Delta$,
 $\Delta^\re(\Gamma)=\mathbb{Z}\Gamma\cap\Delta^\re$.
Since the imaginary roots of an affine or hyperbolic root system are just the linear combinations of real roots
with nonpositive norm, it will suffice to describe $\Delta^\re(\Gamma)$.
From the definition of a reflection, we can see that $w_\a\realroots(\Gamma)\subseteq \realroots(\Gamma)$ for all $\a\in\Gamma$, and so
$$\Phi(\Gamma)\subseteq\realroots(\Gamma).$$
We also have $\Phi(\realroots(\Gamma))=\realroots(\Gamma)$, so the real roots of a $\Delta$-subsystem always form a $\Phi$-subsystem, but possibly for a dif and only iferent set of generators.
The classification of $\Delta$ subsystems reduces to divisibility properties for the sequences $\eta_j$ and $\gamma_j$.
\begin{lemma}\label{divrec} Let $a\ge b\ge1$ with $ab\ge4$, and let $d\ge0$, $i\in\mathbb{Z}$. Then
\begin{align}
\label{divrec-gg} \gamma_d\delta_{j-d}&=\gamma_j-\gamma_{j-2d},\\
\label{divrec-eg} \eta_d\epsilon_{j-d-1}&=\gamma_j-\gamma_{j-2d-1},\\
\label{divrec-ee} \eta_d\delta_{j-d}&=\eta_j-\eta_{j-2d-1},\\
\label{divrec-ge} ab\gamma_d\epsilon_{j-d}&=\eta_j-\eta_{j-2d}.
\end{align}
\end{lemma}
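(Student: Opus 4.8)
The plan is to prove all four identities \eqref{divrec-gg}--\eqref{divrec-ge} by the same mechanism: both sides, as functions of $j$ with $d$ fixed, satisfy the linear recurrence $X_j=(ab-2)X_{j-1}-X_{j-2}$ of Lemma~\ref{acp}(iv), so it suffices to check agreement at two consecutive values of $j$. Indeed, $\gamma_j$, $\eta_j$, and hence (being fixed integer multiples or differences of shifts of these) $\delta_{j-d}=\eta_{j-d}-\eta_{j-d-1}$ and $\epsilon_{j-d}=\gamma_{j-d+1}-\gamma_{j-d}$ all satisfy this recurrence in $j$; a fixed scalar times such a sequence, and a difference of two shifts $\gamma_j-\gamma_{j-2d}$, again satisfy it. So for each identity I would: (1) observe that LHS and RHS each solve the degree-2 recurrence in $j$; (2) verify equality at two convenient seeds, e.g. $j=2d$ and $j=2d+1$ for \eqref{divrec-gg}, where $\gamma_{2d}-\gamma_0=\gamma_{2d}$ must equal $\gamma_d\delta_0$—wait, $\delta_0$ is not defined, so I would instead seed at $j=d$ and $j=d+1$, or use the telescoping identity below. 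This two-point check is a short computation using only parts (i)--(iv) of Lemma~\ref{acp}.

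Alternatively, and perhaps more cleanly, I would derive \eqref{divrec-gg} by telescoping: since $\delta_{j-d}=\eta_{j-d}-\eta_{j-d-1}$ and using the three-term product-to-difference behaviour of the Lucas-type sequence $\gamma$, one has the classical identity $\gamma_m\gamma_{n+1}-\gamma_{m+1}\gamma_n=\gamma_{m-n}$ (valid for $P=ab-2$, $Q=1$), from which $\gamma_d(\eta_{j-d}-\eta_{j-d-1})$ can be massaged—using $\eta_k=ab\gamma_k-\eta_{k-1}$ and $\gamma_k=\eta_{k-1}-\gamma_{k-1}$—into $\gamma_j-\gamma_{j-2d}$. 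The remaining three identities then follow from \eqref{divrec-gg} together with the substitutions $\eta_d=ab\gamma_d-\eta_{d-1}$, $\gamma_{d+1}=\eta_d-\gamma_d$, and $\delta_d=ab\gamma_d-2\eta_{d-1}$ wait more carefully $\delta_d=\eta_d-\eta_{d-1}=(ab\gamma_d-\eta_{d-1})-\eta_{d-1}=ab\gamma_d-2\eta_{d-1}$, and similarly $ab\gamma_d\epsilon_{j-d}=ab\gamma_d(\gamma_{j-d+1}-\gamma_{j-d})$; expanding and re-telescoping reduces each to \eqref{divrec-gg} with shifted indices. In particular \eqref{divrec-ge} is obtained from \eqref{divrec-gg} by applying $\eta_k=ab\gamma_k-\eta_{k-1}$ on both sides, and \eqref{divrec-eg}, \eqref{divrec-ee} by combining \eqref{divrec-gg} and \eqref{divrec-ge}.

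The main obstacle I anticipate is purely bookkeeping: getting the index shifts ($2d$ versus $2d+1$, and the $j-d$ versus $j-d-1$ offsets inside $\delta$ and $\epsilon$) exactly right, since an off-by-one error propagates through all four identities. I would guard against this by (a) checking each final identity against the explicit small-$j$ values in Table~\ref{smallm} and Table~\ref{T-subparams} for, say, $d=1$ and $d=2$, and (b) verifying the seed cases symbolically. Once the recurrence-plus-two-seeds skeleton is in place, each of the four verifications is a two-line calculation, so the write-up should be short; the negative-index cases ($j<2d$, etc.) are automatic because the recurrence runs in both directions and Lemma~\ref{L-neg} already records the reflection symmetry of $\gamma$ and $\eta$.
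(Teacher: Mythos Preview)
Your approach is correct and genuinely different from the paper's. The paper proves all four identities simultaneously by induction on $d$: the base case $d=0$ is immediate, and for the inductive step it first records the auxiliary relations $\delta_{j-1}=ab\epsilon_{j-1}-\delta_j$ and $\epsilon_{j-1}=\delta_j-\epsilon_j$, then expands $\gamma_{e+1}\delta_{j-e-1}=(\eta_e-\gamma_e)(ab\epsilon_{j-e-1}-\delta_{j-e})$ and $ab\gamma_{e+1}\epsilon_{j-e-1}$ into four terms each and applies the inductive hypothesis termwise. Your route instead fixes $d$ and observes that both sides of each identity, as functions of $j$, satisfy the second-order recurrence $X_j=(ab-2)X_{j-1}-X_{j-2}$, so agreement at two consecutive $j$ suffices. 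This is cleaner conceptually and avoids the paper's longer algebraic expansions; the paper's induction-on-$d$ argument has the minor advantage of never needing negative indices or Lemma~\ref{L-neg}.

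One small correction: your hesitation over $\delta_0$ is unwarranted. It is perfectly well defined as $\eta_0-\eta_{-1}=1-(-1)=2$ by Lemma~\ref{L-neg}, and indeed seeding at $j=d$ (where both sides equal $2\gamma_d$) and $j=d+1$ (where both sides equal $(ab-2)\gamma_d$) works cleanly for \eqref{divrec-gg}. Your momentary confusion at $j=2d$ (where you wrote $\delta_0$ but the relevant value is $\delta_d$) is just a slip; once you commit to the seeds $j=d,d+1$ the bookkeeping is straightforward. The ``alternative'' telescoping route via the Lucas identity is viable but noticeably messier to make precise than your primary recurrence-plus-two-seeds plan, so I would drop it from the write-up.
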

\begin{proof}
The equations are easy to prove for $d=0$.
Note that 
\begin{align*}
\delta_{j-1}&=\eta_{j-1}-\eta_{j-2}=(ab\gamma_{j}-\eta_j)-(ab\gamma_{j-1}-\eta_{j-1})=ab\epsilon_{j-1}-\delta_j,\quad\text{and}\\
\epsilon_{j-1}&=\gamma_{j}-\gamma_{j-1}=(\eta_{j}-\gamma_{j+1})-(\eta_{j-1}-\gamma_j)=\delta_{j}-\epsilon_j.
\end{align*}
Assume all of the equations hold for $d\le e$. First we prove \eqref{divrec-gg} and \eqref{divrec-ge} for $d=e+1$:
\begin{align*}
\gamma_{e+1}\delta_{j-e-1}
&=(\eta_e-\gamma_e)(ab\epsilon_{j-e-1}-\delta_{j-e})\\
&=ab\eta_e\epsilon_{j-e-1} -ab\gamma_e\epsilon_{j-e-1}-\eta_e\delta_{j-e}+\gamma_e\delta_{j-e}\\
&=ab(\gamma_j- \gamma_{j-2e-1})-(\eta_{j-1}-\eta_{j-2e})-(\eta_j-\eta_{j-2e-1})+(\gamma_j-\gamma_{j-2e})\\\
&=\gamma_j + (ab\gamma_j-\eta_{j-1}-\eta_j) +(\eta_{j-2e-1}-ab\gamma_{j-2e-1})+(\eta_{j-2e}-\gamma_{j-2e})\\
&=\gamma_j+0-\eta_{j-2e-2}-\gamma_{j-2e-1}=\gamma_j-\gamma_{j-2e-2},\\
ab\gamma_{e+1}\epsilon_{j-e-1}&=ab(\eta_e-\gamma_e)(\delta_{j-e}-\epsilon_{j-e})\\
&=ab\eta_e\delta_{j-e} -ab\gamma_e\delta_{j-e}-ab\eta_e\epsilon_{j-e}+ab\gamma_e\epsilon_{j-e}\\
&=ab(\eta_j- \eta_{j-2e-1})-ab(\gamma_{j}-\gamma_{j-2e})-ab(\gamma_{j+1}-\gamma_{j-2e})+(\eta_j-\eta_{j-2e})\\\
&=\eta_j +ab(\eta_j-\gamma_{j}-\gamma_{j+1}) -ab(\eta_{j-2e-1}-\gamma_{j-2e})+(ab\gamma_{j-2e}-\eta_{j-2e})\\
&=\eta_j+0- ab\gamma_{j-2e-1}+\eta_{j-2e-1}=\eta_j-\eta_{j-2e-2}.
\end{align*}
The proof of  \eqref{divrec-eg} and \eqref{divrec-ee} for $d=e+1$ is similar. Equations (1)-(4) follow by induction for $d\ge0$.
\end{proof}

\begin{lemma} \label{div} Let $a\ge b\ge1$ with $ab\ge4$, and let $d\ge0$, $j\in\mathbb{Z}$.
\begin{enumerate}
\item\label{div-ab} $\gcd(a,\eta_j)=\gcd(b,\eta_j)=1$. 
\item\label{div-gg} $\gamma_d \mid \gamma_j$ if and only if $j\in d\mathbb{Z}$.
\item\label{div-eg} $\eta_d\mid \gamma_j$ if and only if $j\in(2d+1)\mathbb{Z}$. 
\item\label{div-ee} $\eta_d\mid\eta_j$ if and only if $j\in d+(2d+1)\mathbb{Z}$.
\item\label{div-ge} $\gamma_d\mid\eta_j$ if and only if $d=1$, when $ab>4$.
\item\label{div-ge41} $\gamma_d\mid\eta_j$ if and only if $d=2e+1$ is odd and $j\in e+(2e+1)\mathbb{Z}$, when $ab=4$.
\end{enumerate}
\end{lemma}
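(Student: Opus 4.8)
The plan is to prove the six divisibility statements of Lemma~\ref{div} in roughly the order listed, using the recurrences of Lemma~\ref{acp} together with the ``folding'' identities \eqref{divrec-gg}--\eqref{divrec-ge} of Lemma~\ref{divrec}, which are precisely what turn divisibility of indices into divisibility of sequence values.

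For part \ref{div-ab}, I would argue by induction on $j$ using the recurrence $\eta_j=(ab-2)\eta_{j-1}-\eta_{j-2}$ from Lemma~\ref{acp}(iv), together with $\eta_0=1$, $\eta_1=ab-1$. Modulo any prime $p$ dividing $a$ (or $b$, hence $ab$) the recurrence becomes $\eta_j\equiv -2\eta_{j-1}-\eta_{j-2}$, and one checks that $\eta_j\equiv(-1)^j(j+1)\eta_0\pmod p$ is never $\equiv 0$ only when $p\nmid j+1$; more simply, since $ab\mid 2$ would be needed, one instead observes directly that $\eta_j\equiv(-1)^j\pmod{p}$ is false in general, so the cleanest route is: from $\gamma_j=\eta_{j-1}-\gamma_{j-1}$ and $\eta_j = ab\gamma_j-\eta_{j-1}$ we get $\eta_j\equiv-\eta_{j-1}\pmod{ab}$, hence $\eta_j\equiv(-1)^j\pmod{ab}$ for $j\ge0$, so $\gcd(ab,\eta_j)=1$, which gives both claims at once. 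For part \ref{div-gg}: the ``if'' direction is immediate from \eqref{divrec-gg} with $j$ replaced by $dm$ and telescoping (each block $\gamma_{d(k+1)}-\gamma_{d(k-1)}$ is a multiple of $\gamma_d$, actually one should iterate the two-step identity); the ``only if'' direction follows because $\gamma$ is a strong divisibility / Lucas-type sequence, so $\gcd(\gamma_m,\gamma_n)=\gamma_{\gcd(m,n)}$, and $\gamma_d\mid\gamma_j$ forces $\gamma_d\mid\gamma_{\gcd(d,j)}$ with $\gcd(d,j)\le d$; since $\gamma$ is strictly increasing for indices $\ge1$ once $ab>4$ (and for $ab=4$ one handles $\gamma_j=j$ separately), this forces $\gcd(d,j)=d$, i.e.\ $d\mid j$.

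Parts \ref{div-eg} and \ref{div-ee} are the heart of the matter and follow the same template. For \ref{div-eg}, identity \eqref{divrec-eg} reads $\eta_d\epsilon_{j-d-1}=\gamma_j-\gamma_{j-2d-1}$, so $\eta_d\mid\gamma_j$ whenever $\eta_d\mid\gamma_{j-2d-1}$, and iterating, $\eta_d\mid\gamma_j$ for all $j\in(2d+1)\mathbb Z$ (using $\gamma_0=0$ as base case); conversely, reducing the recurrence $\gamma_{j}=(ab-2)\gamma_{j-1}-\gamma_{j-2}$ modulo $\eta_d$ and using $\eta_d=ab\gamma_d-\eta_{d-1}$ shows the sequence $(\gamma_j\bmod\eta_d)$ is periodic with the zeros occurring exactly at multiples of $2d+1$ — this is where I expect the real work to be, pinning down that the first positive zero of $\gamma_j\bmod \eta_d$ is exactly $2d+1$ and not a proper divisor, which again uses strict monotonicity/size estimates (Lemma~\ref{staircase} or the exponential bounds) to rule out $\eta_d\mid\gamma_{j}$ for $0<j<2d+1$. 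Part \ref{div-ee} is entirely parallel using \eqref{divrec-ee}, $\eta_d\delta_{j-d}=\eta_j-\eta_{j-2d-1}$, with base case $\eta_d\mid\eta_d$ (so the residues $0,-2d-1,\dots$ i.e.\ $d\pmod{2d+1}$), and the converse again from periodicity plus monotonicity.

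Parts \ref{div-ge} and \ref{div-ge41} use \eqref{divrec-ge}, $ab\gamma_d\epsilon_{j-d}=\eta_j-\eta_{j-2d}$, but now the key extra input is part \ref{div-ab}: since $\gcd(ab,\eta_j)=1$, $\gamma_d\mid\eta_j$ is a genuinely restrictive condition. When $ab>4$ one shows $\gamma_d$ grows strictly and $\gamma_d>\eta_?$ comparisons (from Lemma~\ref{staircase} or Lemma~\ref{staircase2} in the $b=1$ case, and the general bounds) leave no room unless $d=1$, where $\gamma_1=1$ trivially divides everything; the main obstacle here is organizing the size estimates uniformly over all $H(a,b)$ with $ab>4$, which I would do by invoking the exponential bounds $\mu\psi_+^j-0.45<\gamma_j<\mu\psi_+^j$ and the analogous bound for $\eta_j$ to compare growth rates, or more elementarily by noting $\gamma_d\mid\eta_j$ together with \eqref{divrec-eg}/\eqref{divrec-ee} and $\gcd(\gamma_d,ab)\mid\gcd(\gamma_d,\eta_{d-1})$-type manipulations forces $\gamma_d$ into a corner. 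For the affine case $ab=4$ we have the explicit $\gamma_j=j$, $\eta_j=2j+1$, so $\gamma_d\mid\eta_j$ becomes $d\mid 2j+1$, which is solvable iff $d$ is odd, say $d=2e+1$, and then $2j+1\equiv0\pmod{2e+1}$ iff $j\equiv e\pmod{2e+1}$, giving \ref{div-ge41} by direct computation.
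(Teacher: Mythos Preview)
Your approach is essentially the same as the paper's: the congruence $\eta_j\equiv(-1)^j\pmod{ab}$ for part~\ref{div-ab}, repeated use of the identities in Lemma~\ref{divrec} for parts (ii)--(v), and the explicit formulas $\gamma_j=j$, $\eta_j=2j+1$ for part~\ref{div-ge41}. The paper's own proof is extremely terse (four lines in total), so your sketch already contains more detail; in particular your plan for (v)---reduce modulo $\gamma_d$ via \eqref{divrec-ge} to a window of length $2d$ and then use size/coprimality (e.g.\ $\eta_j=\gamma_{j+1}+\gamma_j$ with $\gcd(\gamma_j,\gamma_{j+1})=1$)---is a reasonable way to flesh out what the paper leaves to ``Lemma~\ref{divrec} and Lemma~\ref{acp}''.
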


\begin{proof}
\ref{div-ab} This follows from the fact that $\eta_j\equiv(-1)^j\pmod{ab}$, which is easily proved by induction.

Cases (ii)-(v) proceed by repeated application of Lemma~\ref{divrec}. Part (v) also uses Lemma~\ref{acp}. Part (vi) is immediate.
\end{proof}


\newpage
\begin{thm}
\label{subsysbeq1}
Let $\Delta$ be a rank 2 root system of type $H(a,b)$ with $a\ge b$ and $ab\ge4$.
Let $\Gamma\subseteq\Delta^\re$ be nonempty. 

\begin{enumerate}
\item If $a>4$, $b=1$ and $\Phi(\Gamma)$ is the subsystem consisting of all short roots in $\Delta^\re$, then $\Delta^\re(\Gamma)=\Delta^\re\ne\Phi(\Gamma)$.
 

\item If $a=4$, $b=1$ and $\Phi(\Gamma)$ is a subsystem of type $\text{II}_S$ with base $\a^{SU}_r,\a^{SL}_{d-r-1}$ for some odd $d=2e+1$ and $0\le r<d$,
then $\Delta^\re(\Gamma)\ne\Phi(\Gamma)$ is a subsystem of type $\text{II}_{LS}$ with base $\a^{LL}_{s},\a^{SU}_{e-s}$ where $s\equiv e-r\pmod{d}$ and $-e\le s\le e$. 

\item In all other cases,  $\Delta^\re(\Gamma)=\Phi(\Gamma)$.
\end{enumerate}
\end{thm}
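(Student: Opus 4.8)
The plan is to exploit the fact, established just before the statement, that $\Delta^\re(\Gamma)$ is always a $\Phi$-subsystem (namely $\Phi(\Delta^\re(\Gamma))=\Delta^\re(\Gamma)$) and contains $\Phi(\Gamma)$. So $\Delta^\re(\Gamma)$ is one of the five types $\mathrm{I}_L,\mathrm{I}_S,\mathrm{II}_L,\mathrm{II}_S,\mathrm{II}_{LS}$ of Table~\ref{T-Phisubs}, and it is the \emph{smallest} such $\Phi$-subsystem that is closed under $\Z$-linear combinations. The strategy is: start from $\Phi(\Gamma)$ of a given type, compute the $\Z$-span of its simple roots, intersect with $\Delta^\re$, and identify which $\Phi$-subsystem results. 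The key computational input is the divisibility Lemma~\ref{div} together with the identities of Lemma~\ref{divrec}, which tell us exactly which $\eta_j,\gamma_j$ appear as coordinates of roots lying in a given sublattice.

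First I would dispose of the types with no mixing of long and short roots. For $\Phi(\Gamma)$ of type $\mathrm{I}_L$ or $\mathrm{I}_S$, $\Gamma\subseteq\{\pm\alpha\}$ and $\Z\Gamma=\Z\alpha$ meets $\Delta^\re$ only in $\{\pm\alpha\}$, so $\Delta^\re(\Gamma)=\Phi(\Gamma)$. For type $\mathrm{II}_L$ with base $\alpha^{LL}_r,\alpha^{LU}_{d-r-1}$, after translating by $(w_1w_2)^{-r}$ we may assume $r=0$, so the sublattice is $L=\Z\alpha^{LL}_0+\Z\alpha^{LU}_{d-1}=\Z\alpha_1+\Z(\eta_{d-1}\alpha_1+a\gamma_d\alpha_2)=\Z\alpha_1+\Z a\gamma_d\alpha_2$. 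A real root $x\alpha_1+y\alpha_2$ lies in $L$ iff $a\gamma_d\mid y$. Now every long root has $\alpha_2$-coordinate $a\gamma_j$, and by Lemma~\ref{div}\ref{div-gg}, $\gamma_d\mid\gamma_j$ iff $d\mid j$ (resp.\ the $\mathrm{LU}$ case), so the long roots in $L$ are exactly those of $\Phi(\Gamma)$; every short root has $\alpha_2$-coordinate $\eta_j$, and by Lemma~\ref{div}\ref{div-ge} (for $ab>4$) $\gamma_d\mid\eta_j$ forces $d=1$ — but a type-$\mathrm{II}_L$ subsystem with $d=1$ is all of $W\alpha_1$, and then one checks directly no short root can have $\alpha_2$-coordinate divisible by $a$ since $\gcd(a,\eta_j)=1$ by Lemma~\ref{div}\ref{div-ab}. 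Hence $\Delta^\re(\Gamma)=\Phi(\Gamma)$, \emph{except} precisely when $ab=4$, where Lemma~\ref{div}\ref{div-ge41} opens the door for short roots to enter; but that is the affine case and does not change the type in the ways claimed (one verifies the affine subsystems are self-contained, or notes $b=1,a=4$ is handled in part~(ii) via the symmetric treatment of $\mathrm{II}_S$). The type $\mathrm{II}_S$ analysis is the mirror image: the sublattice is $\Z\alpha_2+\Z b\gamma_d\alpha_1$, short roots in it are exactly $\Phi(\Gamma)$ by \ref{div-gg}, and a long root $\eta_j\alpha_1+a\gamma_j\alpha_2$ enters iff $b\gamma_d\mid\eta_j$, i.e.\ (when $b=1$) $\gamma_d\mid\eta_j$. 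By \ref{div-ge} this is impossible when $ab>4$, giving part~(iii) in this subcase; when $ab=4$, i.e.\ $a=4,b=1$, \ref{div-ge41} says it happens iff $d=2e+1$ is odd and $j\in e+(2e+1)\Z$, which is exactly the index set of a type-$\mathrm{II}_{LS}$ subsystem — this produces part~(ii), and the precise base $\alpha^{LL}_s,\alpha^{SU}_{e-s}$ with $s\equiv e-r\pmod d$ is then read off by matching index sets via Proposition~\ref{Iclass}(iii). Finally, for type $\mathrm{II}_{LS}$, the sublattice already contains one long and one short root, and using the $\mathrm{II}_{LS}$ base coordinates together with Lemma~\ref{divrec} one shows $\Z\Gamma$ is saturated — no further real roots enter — so $\Delta^\re(\Gamma)=\Phi(\Gamma)$ in all cases; this also covers part~(i), since "all short roots'' is the type-$\mathrm{I}_S$-limit $d\to$ the full short system, and one checks the $\Z$-span of all short roots contains $\alpha_1+\alpha_2$ and $\alpha_2$, hence $\alpha_1$, hence all of $\Delta^\re$ when $b=1$ (this is where Theorem~\ref{beq1} is used: the short roots $\alpha_2$ and $\alpha_1+\alpha_2$ already generate $\alpha_1$).

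The main obstacle I anticipate is the bookkeeping in part~(ii): correctly translating the divisibility condition "$j\in e+(2e+1)\Z$'' from Lemma~\ref{div}\ref{div-ge41} into the statement that the long roots newly appearing in $\Delta^\re(\Gamma)$ together with the original short roots form precisely a type-$\mathrm{II}_{LS}$ system with the stated base, and verifying the congruence $s\equiv e-r\pmod d$ aligns the two index sets correctly under the parametrization of Proposition~\ref{Iclass}. This requires carefully tracking the shift by $r$ (the translation $(w_1w_2)^{-r}$ does not commute with passing between the long and short orbits in an obvious way), and checking that the resulting $\Z$-span is genuinely saturated — i.e.\ that \emph{no additional} long or short roots beyond those predicted sneak in — which again comes down to a finite check using Lemma~\ref{divrec} and the coordinate formulas of Lemma~\ref{acp}. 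The remaining cases are routine divisibility arguments once the framework above is set up.
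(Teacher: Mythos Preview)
Your overall strategy matches the paper's exactly: case-split on the type of $\Phi(\Gamma)$ from Table~\ref{T-Phisubs}, normalize $r$ by a Weyl translation, compute the $\Z$-span as an explicit rank-2 sublattice, and test each real root's membership via Lemma~\ref{div}. However, there are three points of confusion to clean up.

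First, in the $\mathrm{II}_S$ case you write that $\gamma_d\mid\eta_j$ ``is impossible when $ab>4$'' by Lemma~\ref{div}\ref{div-ge}; in fact that lemma says it holds \emph{if and only if} $d=1$, and the $d=1$ case is precisely part~(i): the subsystem of all short roots is type $\mathrm{II}_S$ with $d=1$, not a ``type-$\mathrm{I}_S$-limit''. Your separate ad hoc argument for part~(i) via $\alpha_1=(\alpha_1+\alpha_2)-\alpha_2$ is valid for $b=1$ but redundant once you read \ref{div-ge} correctly; the paper simply lets part~(i) fall out of the $\mathrm{II}_S$ analysis. Second, in the $\mathrm{II}_L$ case there is no exception for $ab=4$: the factor of $a>1$ in the sublattice $\Z\alpha_1+\Z a\gamma_d\alpha_2$ already excludes every short root by Lemma~\ref{div}\ref{div-ab}, regardless of whether $\gamma_d\mid\eta_j$, so the detour through \ref{div-ge} and \ref{div-ge41} is unnecessary here. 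Third, your $\mathrm{II}_{LS}$ case is too vague: after translating to $r=d$ the sublattice is $\Z\eta_d\alpha_1+\Z\alpha_2$, and you need Lemma~\ref{div}\ref{div-ee} and~\ref{div-eg} (not merely the identities of Lemma~\ref{divrec}) to conclude that the long and short index sets are exactly $d+(2d+1)\Z$ and $(2d+1)\Z$, matching $\Phi(\Gamma)$.
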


\medskip
\begin{proof} 
If $\Phi(\Gamma)$ has type $\text{I}_L$ or $\text{I}_S$, then it is clear that  $\Delta^\re(\Gamma)=\Phi(\Gamma)$.\\

Suppose $\Phi(\Gamma)$ has type $\text{II}_L$. Since $\Phi(\Gamma)=(w_1w_2)^r\Phi(\{\a^{LL}_0,\a^{LU}_{d-1}\})$, it suffices to consider $r=0$.\\

Now $\a^{LL}_0=\a_1$ and $\a^{LU}_{d-1}=\eta_{d-1}\a_1+a\gamma_d\a_2$, so
$$\Delta^\re(\Gamma) =  \mathbb{Z}\{\alpha^{LL}_0,\alpha^{LU}_{d-1}\} \cap \Delta^\re = \mathbb{Z}\{\alpha_1,a\gamma_d\alpha_2\}\cap \Delta^\re.$$

The root $\a^{LL}_j=\eta_{j}\alpha_1+a\gamma_{j}\alpha_2$ is in $\Delta^\re(\Gamma)$ if and only if $\gamma_d\mid\gamma_j$ if and only if $j\in d\mathbb{Z}$ by Lemma~\ref{div}\ref{div-gg}.\\

Also $\a^{SU}_j=b\gamma_j\a_1+\eta_i\a_2$ is in $\Delta^\re(\Gamma)$ if and only if $a\gamma_d\mid \eta_j$ which is not possible by Lemma~\ref{div}\ref{div-ab} since $a>1$.
Hence $\Delta^\re(\Gamma)=\Phi(\Gamma)$.\\

Suppose $\Phi(\Gamma)$ has type $\text{II}_{LS}$. Since $\Phi(\Gamma)=(w_2w_1)^{d-r}\Phi(\{\a^{LL}_d,\a^{SU}_{0}\})$, it suffices to consider $r=d$.

We have $\a^{LL}_{d}=\eta_d\a_1+a\eta_d\a_2$ and $\a^{SU}_0=\a_2$, so
$$\Delta^\re(\Gamma) =  \mathbb{Z}\{\alpha^{LL}_d,\alpha^{SU}_{0}\} \cap \Delta^\re = \mathbb{Z}\{\eta_d\alpha_1,\alpha_2\}\cap \Delta^\re.$$

Now $\a^{LL}_j=\eta_{j}\alpha_1+a\gamma_{j}\alpha_2$ is in $\Delta^\re(\Gamma)$ if and only if $\eta_d\mid\eta_j$  if and only if $j\in d+(2d+1)\mathbb{Z}$ by Lemma~\ref{div}\ref{div-ee}.
And $\a^{SU}_j=b\gamma_j\a_1+\eta_j\a_2$ is in $\Delta^\re(\Gamma)$ if and only if $\eta_d\mid b\gamma_i$ if and only if $j\in (2d+1)\mathbb{Z}$ by Lemma~\ref{div}\ref{div-eg}.
Hence $\Delta^\re(\Gamma)=\Phi(\Gamma)$.\\

Finally suppose $\Phi(A)$ has type $\text{II}_S$. Since $\Phi(\Gamma)=(w_2w_1)^r\Phi(\{\a^{SU}_0,\a^{SL}_{d-1}\})$, it suffices to consider $r=0$.\\

Now $\a^{SU}_0=\a_2$ and $\a^{SL}_{d-1}=b\gamma_d\a_1+\eta_{d-1}\a_2$, so
$$\Delta^\re(\Gamma) =  \mathbb{Z}\{\alpha^{LL}_0,\alpha^{LU}_{d-1}\} \cap \Delta^\re = \mathbb{Z}\{b\gamma_d\alpha_1,\alpha_2\}\cap \Delta^\re.$$

We have $\a^{SU}_j=b\gamma_i\a_1+\eta_j\a_2\in \Delta^\re(\Gamma)$ if and only if $\gamma_d\mid \gamma_j$ if and only if $j\in d\mathbb{Z}$.
And $\a^{LL}_j=\eta_{j}\alpha_1+a\gamma_{j}\alpha_2$ is in $\Delta^\re(\Gamma)$ if and only if $b\gamma_d\mid\eta_j$.
By Lemma~\ref{div}\ref{div-ab}, \ref{div-ge}, and \ref{div-ge41},  this can only happen if $a>4$, $b=1$ and $d=1$; or $a=4$, $b=1$ and $d$ odd.\\

If $a>4$, $b=1$, and $d=1$, then $\Phi(\Gamma)$ is the set of all short real roots, and $b\gamma_d\mid\eta_j$ for all $j$ so $\Delta^\re(\Gamma)=\Delta^\re$.\\

If $a=4$, $b=1$, and $d=2e+1$, then $b\gamma_d\mid\eta_i$ if and only if $j\in e+(2e+1)\mathbb{Z}$, so $I^S=(2e+1)\mathbb{Z}$, $I^L=e+(2e+1)\mathbb{Z}$, and hence $\Delta^\re(\Gamma)$ has type $\text{II}_{LS}$ with the given basis. In all other cases  $\Delta^\re(\Gamma)=\Phi(\Gamma)$.
\end{proof}

Theorem~1.1 and Theorem~1.2 for $\Delta$-subsystems now follow.

\newpage

\end{document}